\title{Improved Hardness Results for the Guided Local Hamiltonian Problem}
\titlerunning{} 
\author{Chris {Cade}}{QuSoft \& University of Amsterdam (UvA), Netherlands}{chris.cade@cwi.nl}{}{}
\author{Marten {Folkertsma}}{QuSoft \& CWI, Netherlands}{marten.folkertsma@cwi.nl}{}{}
\author{Sevag {Gharibian}}{Paderborn University, Germany}{sevag.gharibian@upb.de}{}{}
\author{Ryu {Hayakawa}}{Kyoto University, Japan}{ryu.hayakawa@yukawa.kyoto-u.ac.jp}{}{}
\author{Fran{\c c}ois {Le Gall}}{Nagoya University, Japan}{legall@math.nagoya-u.ac.jp}{}{}
\author{Tomoyuki Morimae}{Kyoto University, Japan}{tomoyuki.morimae@yukawa.kyoto-u.ac.jp}{}{}
\author{Jordi {Weggemans}}{QuSoft \& CWI, Netherlands}{jordi.weggemans@cwi.nl}{}{}
\authorrunning{C. Cade, M. Folkertsma, S. Gharibian, R. Hayakawa, F. Le Gall, T. Morimae and J. Weggemans}
\keywords{Quantum computing, Quantum advantage, Quantum Chemistry, Guided Local Hamiltonian Problem} 
\newcommand{\GLHdes}{\textup{GLH}}
\newcommand{\BQP}{\textup{BQP}}
\newcommand{\QMA}{\textup{QMA}}
\newcommand{\QCMA}{\textup{QCMA}}
\newcommand{\complex}{{\mathbb C}}
\newcommand{\glh}[0]{\text{GLHLE}(k,c,a,b,\delta)}
\newcommand{\mO}{\mathcal{O}}
\newcommand{\psihist}{\psi_{\textup{hist}}}
\newcommand{\hin}{H_{\textup{in}}}
\newcommand{\hprop}{H_{\textup{prop}}}
\newcommand{\hstab}{H_{\textup{stab}}}
\newcommand{\hout}{H_{\textup{out}}}
\newcommand{\abs}[1]{\left|#1\right|}
\newcommand{\norm}[1]{\|#1\|}
\newcommand{\poly}{\mathrm{poly}}
\newcommand{\tr}{\mathrm{tr}}
\newcommand{\newreptheorem}[2]{\newtheorem*{rep@#1}{\rep@title}\newenvironment{rep#1}[1]{\def\rep@title{#2 \ref*{##1}}\begin{rep@#1}}{\end{rep@#1}}}
\newcommand{\kb}[2]{|#1\rangle\langle#2|}
\begin{document}

\maketitle

\begin{abstract}
Estimating the ground state energy of a local Hamiltonian is a central problem in quantum chemistry. In order to further investigate its complexity and the potential of quantum algorithms for quantum chemistry, Gharibian and Le Gall (STOC 2022) recently introduced the \emph{guided local Hamiltonian problem (GLH)}, which is a variant of the local Hamiltonian problem where an approximation of a ground state (which
    is called a guiding state) is given as an additional input. Gharibian and Le Gall showed quantum advantage (more precisely, \BQP-completeness) for GLH with $6$-local Hamiltonians when the guiding state has fidelity (inverse-polynomially) close to $1/2$ with a ground state.

    In this paper, we optimally improve both the locality and the fidelity parameter: we show that the \BQP-completeness persists even with 2-local Hamiltonians, and even when the guiding state has fidelity (inverse-polynomially) close to 1 with a ground state. Moreover, we show that the \BQP-completeness also holds for 2-local physically motivated Hamiltonians on a 2D square lattice or a 2D triangular lattice. 
    Beyond the hardness of estimating the ground state energy, we also show \BQP-hardness persists when considering estimating energies of \emph{excited} states of these Hamiltonians instead.
    Those make further steps towards establishing \emph{practical quantum advantage} in quantum chemistry.
\end{abstract}

\section{Introduction}
Simulation of physical systems is one of the originally envisioned applications of quantum computing~\cite{F82,F85}. Quantum chemistry, in particular, has seen much activity on this front in recent years, e.g. \cite{Aaronson09, Aspuru-Guzik+05, Bauer+20, Lee+21, Reiher+17,YBWRB21}. There, a central goal is to estimate the {ground state energy} of a given {$k$-local Hamiltonian} $H$, denoted the \emph{$k$-local Hamiltonian problem ($k$-LH)}. Roughly, for this problem, a $k$-local Hamiltonian $H=\sum_i H_i$ on $n$ qubits is a $2^n\times 2^n$ Hermitian matrix, specified succinctly via ``local quantum clauses'' $H_i$ acting on $k\in O(1)$ qubits each. The eigenvalues of $H$ are the discrete energy levels of the corresponding quantum system. In particular, the smallest eigenvalue, which we denote $\lambda_0(H)$, is called the ground state energy. An eigenvector corresponding to $\lambda_0(H)$ is called a ground state, and describes a state of the quantum system in the energy configuration $\lambda_0(H)$. Note that $k$-LH strictly generalizes classical $k$-SAT, in that any instance of the latter can be embedded into the former.

Unfortunately, it is nowadays well-known that estimating ground state energies of local Hamiltonians is QMA-complete~\cite{Kitaev+02}. This hardness persists, moreover, even in the bosonic~\cite{WMN10} and fermionic settings~\cite{SV09}. Thus, assuming $\BQP\neq \QMA$, one cannot hope for an efficient algorithm for $k$-LH on \emph{all} $k$-local Hamiltonians.

\paragraph*{What actually happens in practice.} In an attempt to bypass worst-case hardness results, in practice the quantum chemistry community often adopts the following two-step procedure:
\begin{itemize}
\item (Step 1: Ground state approximation) A classical heuristic algorithm is applied to obtain a ``guiding state'' $\ket{\psi}$, which is hoped to have ``good'' fidelity with a ground state.
 \item (Step 2: Ground state energy approximation) The guiding state $\ket{\psi}$ is used in Quantum Phase Estimation (QPE) \cite{Kitaev95} to efficiently compute the corresponding ground state energy \cite{Abrams+99,Aspuru-Guzik+05}. (A more recent approach is based on variational quantum algorithms, aimed more at near-term hardware (see \cite{Cerezo+21} for a survey), but which is heuristic in nature (unlike QPE).)
\end{itemize}

\noindent Two comments: (1) There is something special about Step 2 --- it is a unique strength of quantum computers to be able to resolve an eigenvalue (within additive $1/\poly(n)$ precision) of a (sparse) Hermitian matrix given just an \emph{approximation} $\ket{\psi}$ to the corresponding eigenvector (via QPE)!\footnote{
Actually, quantum computers can efficiently prepare a ground state with fidelity $1-1/\exp(n)$ given access to a guiding state $\ket{u}$ that has inverse polynomial fidelity with a ground state $\ket{g}$ (i.e. $|\langle u|g\rangle|\geq 1/\poly(n)$)
using quantum amplitude amplification for local Hamiltonians that have inverse-polynomial spectral gaps~\cite{lin2020near}.
}
Indeed, the closely related task of (sparse) matrix inversion, which can be solved efficiently on a quantum computer coherently by diagonalizing the matrix and ``manually'' inverting its eigenvalues via postselection, is \BQP-complete~\cite{HHL09}. 
(2) In general, one does \emph{not} expect a good\footnote{``Good'' here meaning a state $\ket{\psi}$ with inverse polynomial fidelity with a ground state, and with a succinct classical description allowing $\ket{\psi}$ to be prepared efficiently.} guiding state for arbitrary local Hamiltonian $H$ to exist, as this would imply $\QCMA=\QMA$. And even if such a guiding state \emph{did} exist, finding it can still be hard. For example, minimizing $\tr(H\rho)$ over the ``simplest'' quantum ansatz of tensor product states, i.e. $\rho=\rho_1\otimes\rho_2\otimes\cdots\otimes\rho_n$ for $\rho_i\in\mathcal{L}(\complex^2)$, remains NP-hard (seen by letting $H$ be a diagonal Hamiltonian encoding a classical $3$-SAT instance).

\paragraph*{Directions for study.} With Steps 1 and 2 above in mind, in order to practically obtain a quantum advantage for quantum chemistry problems, there are two branches of study necessary:
\begin{itemize}
     \item (Step 1: Ground state approximation) Here, the best one can hope for is fast algorithms tailored to physically motivated \emph{special cases} of Hamiltonians $H$ (either heuristic or worst-case poly-time complexity). This is arguably the bottleneck for fast quantum algorithms outperforming classical techniques~\cite{C22}.

     \item (Step 2: Ground state energy approximation) A thorough complexity theoretic understanding of which Hamiltonian families provably permit quantum computers to outperform classical ones, assuming a good guiding state has been found (in Step 1).
\end{itemize}
In \cite{gharibian2021dequantizing}, the formal study of the second step above was initiated. Specifically, the \emph{Guided $k$-local Hamiltonian problem ($k$-GLH)} was introduced, which is stated roughly as follows (formally given in \Cref{def:GLHLE}): {Given a $k$-local Hamiltonian $H$, an appropriate ``representation'' of a guiding state $\ket{\psi}$ with $\delta$-fidelity with the ground space of $H$, and real thresholds $\beta> \alpha$, estimate the ground state energy of $H$.} Then, two results were shown:
\begin{itemize}
\item
For any constant $k$, $k$-GLH can be efficiently solved \emph{classically} within \emph{constant} precision, i.e. for $\beta-\alpha\in\Theta(1)$ and $\delta\in \Theta(1)$.
\item
In contrast,
$6$-GLH is \BQP-hard for \emph{inverse polynomial} precision, i.e. $\beta-\alpha\geq 1/\poly(n)$, and  $\delta=1/\sqrt{2}-1/\poly(n)$.
\end{itemize}
The latter regime of inverse-polynomial precision turns out to be the relevant one for solving quantum chemistry problems in practice --- the desired ``chemical accuracy'' is about 1.6 millihartree (which is constant relative to an unnormalized Hamiltonian), which upon renormalization of the Hamiltonian (as done here) yields the claimed inverse polynomial precision. This \BQP-hardness result thus gives theoretical evidence for the superiority of quantum algorithms for chemistry.

Four important problems were left open in \cite{gharibian2021dequantizing}: 
Is $k$-GLH still \BQP-hard with larger $\delta$, and in particular for $\delta$ arbitrarily close to 1? 
Is $k$-GLH still \BQP-hard for $k<6$? 
Is $k$-GLH still \BQP-hard for estimating the excited state energies?
Is $k$-GLH still \BQP-hard for physically motivated Hamiltonians?

\paragraph*{This work.} In this work, we continue the agenda toward Step 2 above by resolving these four open questions. Here are our main contributions:
\begin{itemize}
\item
First, we show that \BQP-hardness continues to hold even for $\delta= 1-1/\poly(n)$, i.e. even when we are promised the guiding state $\ket{\psi}$ is a remarkably good approximation to the ground state.
\item
Second, we show that \BQP-hardness continues to hold even for $k=2$. (Note that for $k=1$, the problem can be solved efficiently classically, even without a guiding state.)
\item Third, we extend the $\BQP$-hardness results to the case when one is interested in estimating energies of excited states, rather than just the groundstate. Interestingly, we are only able to show $\BQP$-completeness in this setting by showing that the first point holds, i.e. the \BQP-hardness in the regime $\delta \in [\frac{1}{2}+\Omega(1/poly(n)),1-{\Omega}(1/poly(n))]$.
\item
Fourth, we prove hardness results for \emph{physically motivated} Hamiltonians.
They include
XY model (constraints of the form
$XX+YY$), Heisenberg model (constraints of the form $XX+YY+ZZ$), the antiferromagnetic $XY$ model and the antiferromagnetic Heisenberg model (i.e. ``Quantum Max Cut''~\cite{GP19}).
In contrast, the \BQP-hardness construction of \cite{gharibian2021dequantizing} is arguably artificial, because
they used the circuit-to-Hamiltonian construction of \cite{Kitaev+02} and query Hamiltonian construction of~\cite{A14}.
\end{itemize}

To formalize the third direction, we introduce the \textit{Guided $k$-Local Hamiltonian Low Energy}-problem ($k$-$\text{GLHLE}$) in which the guiding state has $\delta$-fidelity with the $c$'th excited state of $H$ and the problem is to estimate the $c$'th excited state energy of $H$ (for a formal definition, see Definition~\ref{def:GLHLE}). 
Then, the four contributions above are summarized in the following theorem.

\begin{theorem}[Main result]
\label{thm:main}
For any $\delta\in (0,1-\Omega(1/\poly(n)))$, constant $k \geq 2$ and some integer $0\leq c \leq \mathcal{O}(poly(n))$, there exist $a,b\in [-1,1]$ with $b-a\in \Omega(1/\poly(n))$ such that k-$\text{GLHLE}$ is $\BQP$-hard.
Moreover, it is still $\BQP$-hard if the 2-local Hamiltonian is restricted to any of the following families of Hamiltonians:
\begin{itemize}
    \item non-2SLD Hamiltonian on a 2D square lattice
    \item antiferromagnetic Heisenberg model
    \item antiferromagnetic $XY$ model on a 2D triangular lattice.
\end{itemize}
\end{theorem}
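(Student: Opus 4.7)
The plan is to bootstrap from the \BQP-hardness of 6-GLH established in \cite{gharibian2021dequantizing} via four reductions that successively (i) drive the locality down to $k=2$, (ii) amplify the guiding-state fidelity from $\delta\approx 1/\sqrt{2}$ arbitrarily close to 1, (iii) shift the target from the ground state to an arbitrary $c$-th excited state, and (iv) force the resulting Hamiltonian into the physically motivated families listed. Each reduction produces a new triple (Hamiltonian, succinct guiding-state description, promise gap), and the key design constraint throughout is that the classical description of the source guiding state efficiently transforms into that of the target while the inverse-polynomial promise gap $b-a$ is maintained.

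For the locality reduction I would apply standard perturbative gadget machinery (Kempe--Kitaev--Regev, Oliveira--Terhal) to encode the 6-local Hamiltonian as a 2-local one. The ancillary gadget qubits have a unique ground state, so the guiding state simply tensors with that fixed ancilla state; perturbative bounds preserve fidelity and spectral gap up to inverse-polynomial factors. For fidelity amplification, the idea is to \emph{unbalance} the clock construction behind \cite{gharibian2021dequantizing}: their history-state ground state spreads amplitude uniformly across clock branches, only half of which the guiding state matches, explaining the $1/\sqrt{2}$ barrier. Replacing the uniform clock by one with a steep polynomial bias (while retaining the propagation terms) concentrates almost all amplitude of the new ground state onto the branches the guiding state points to, pushing the overlap to $1-1/\poly(n)$ at the cost of only inverse-polynomial shifts in the spectral gap.

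For the excited-state extension I would take the 2-local \BQP-hard instance $(H,\ket{\psi})$ from the previous step and append an ancilla register carrying a diagonal Hamiltonian $H_{\text{spec}}$ whose spectrum consists of $c$ artificial levels placed just below $\lambda_0(H)$ and one distinguished level at $0$. Setting $H' = H\otimes\ketbra{\text{dist}}{\text{dist}} + I\otimes H_{\text{spec}}$, plus a large penalty on the complement of the distinguished ancilla block, forces the $c$-th excited state of $H'$ to coincide with the ground state of $H$ on the distinguished branch, while the lifted guiding state $\ket{\psi}\otimes\ket{\text{dist}}$ is orthogonal to the $c$ artificial low-lying eigenstates and inherits the fidelity of the previous step. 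Finally, for the physically-motivated variants, I would invoke the Hamiltonian simulation and universality results of Cubitt--Montanaro and Piddock--Montanaro, which show that the XY, antiferromagnetic Heisenberg, and general non-2SLD families on the appropriate lattices simulate arbitrary 2-local Hamiltonians within inverse-polynomial error; pushing the 2-local guiding state through the corresponding encoding isometry yields the required instance.

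The main obstacle I expect is that all four reductions are perturbative or near-perturbative, so their inverse-polynomial errors in fidelity, promise gap, and simulation accuracy compound. The fidelity-amplification step is particularly delicate, since driving $\delta$ to within $1/\poly(n)$ of 1 leaves essentially no slack for the simulation errors in the physical-Hamiltonian step; one likely needs to interleave the reductions (for instance performing the physical-Hamiltonian simulation \emph{before} the final amplification) and to tune all polynomial scales jointly. A second, more technical hurdle is that the simulation constructions for physical Hamiltonians encode the original low-energy space via an isometry onto a nontrivial subspace, and one must verify that this embedding is itself efficiently classically describable so that the transformed guiding state remains a legal, succinctly-specified input to $k$-$\text{GLHLE}$.
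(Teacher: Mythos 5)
Your high-level plan --- chain reductions driving locality, fidelity, target eigenlevel, and Hamiltonian family --- matches the paper's structure, and your excited-state gadget (an ancilla qubit selecting between a diagonal ``artificial spectrum'' block and a rescaled copy of the hard Hamiltonian) is essentially the paper's $H^{(c)}=H^{(z)}\otimes\ket{0}\bra{0}+H^{(s)}\otimes\ket{1}\bra{1}$ construction. The concern you raise about tracking the guiding state through the simulation isometry is also exactly the right one, which the paper resolves by introducing \emph{semi-classical encoded states} and checking they are closed under each mediator-qubit/subspace-encoding/local-unitary step of the Zhou--Aharonov strong simulation chain.

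The fidelity-amplification step, however, contains a genuine gap. You diagnose the $1/\sqrt 2$ barrier of \cite{gharibian2021dequantizing} as arising because ``the history state spreads amplitude uniformly across clock branches,'' and propose to fix it by biasing the clock weights. That is a misdiagnosis. The barrier in \cite{gharibian2021dequantizing} comes from their \emph{block encoding} $H=\tfrac{\alpha+\beta}{2}I\otimes\ket{0}\bra{0}+H'\otimes\ket{1}\bra{1}$: the YES ground state lives in the $\ket{1}$ block and the NO ground state in the $\ket{0}$ block, so any single fixed guiding state that must overlap the ground state in \emph{both} cases is forced into $\ket{\phi}\otimes\ket{+}$ and capped at fidelity $1/2$. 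Reweighting the clock register does nothing to this two-block bottleneck. The paper's actual fix is to drop the block encoding entirely: apply Kitaev's construction directly with a large prefactor $\Delta$ on $H_{\mathrm{in}}+H_{\mathrm{prop}}+H_{\mathrm{stab}}$ so that, by the Schrieffer--Wolf / projection-lemma perturbation analysis, the unique gapped ground state is $1/\poly$-close to the (uniform-weight) history state in \emph{both} YES and NO cases; then pre-idle the circuit with $N$ identity gates so that the subset state $\frac{1}{\sqrt N}\sum_{t=1}^{N}\ket{x,0}\otimes\ket{t}$ has overlap $N/(m+N+1)\to 1$ with that history state. So the fidelity is gained not by a non-uniform clock but by (a) making the ground state a single history state rather than a two-block superposition, and (b) padding with idling steps. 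Without this structural change, your biased-clock construction would still be stuck at fidelity $\le 1/2$. Relatedly, the paper does not bootstrap from the prior 6-GLH result but builds the BQP reduction from scratch, precisely because the old instance is not salvageable.

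One more minor point: you suggest interleaving reductions (physical simulation before amplification) to control error accumulation, but the paper's order --- fidelity amplification at locality 5, then excited states at locality 6, then locality reduction and physical simulation last --- works because the gap promise from Proposition~\ref{proposition:overlap} (a $\gamma$-gapped instance with $\gamma\in\Omega(1/\poly)$) is explicitly threaded through and preserved by the later perturbative simulation steps (Lemma~\ref{lemma:groundstates}), so no interleaving is needed.
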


Here, the ``non-2SLD'' Hamiltonians are, roughly,
$2$-local Hamiltonians that cannot be diagonalized via single-qubit unitaries (see \Cref{def:non2sld} for the formal definition).
(The term 2SLD is short for ``the 2-local parts of all interactions in the set are simultaneously locally diagonalizable''.)
It was originally introduced in the Hamiltonian complexity classification of Cubitt and Montanaro~\cite{cubitt2016complexity}.
The $XY$ model and the Heisenberg model are examples of non-2SLD Hamiltonians.

\paragraph*{Techniques.}
Now let us explain our technical contributions.
Our first result is the improvement of the fidelity $\delta$
 (Proposition \ref{proposition:overlap} in Section~\ref{sec:complexity}).
The construction of~\cite{gharibian2021dequantizing} cannot exceed $\delta=1/2$ 
, but we achieve the fidelity $\delta=1-1/\poly(n)$.
Let us explain why the construction of~\cite{gharibian2021dequantizing} cannot exceed the fidelity $\delta=1/2$.
Their construction for the \BQP-hardness result is the following local Hamiltonian
\[
H= \frac{\alpha+\beta}{2} I\otimes \ket{0}\bra{0} + H' \otimes \ket{1}\bra{1},
\]
where $\beta-\alpha>1/\poly(n)$
and $H'$ is a certain local Hamiltonian whose lowest eigenvalue is $\leq \alpha$ in the YES case and is $\geq \beta$ in the NO case.
It is clear that a ground state of $H$ is
$\ket{\psi}\otimes \ket{1}$ in the YES case, where
$\ket{\psi}$ is a ground state of $H'$.
For the NO case, a ground state is $|0...0\rangle\otimes \ket{0}$.
It can then be easily observed that the optimal guiding state (i.e. the guiding state that has the maximum fidelity with ground states
{\it in both the YES and the NO cases}) is written as $|\phi\rangle\otimes|+\rangle$ for a certain choice of $|\phi\rangle$,
which shows that the fidelity cannot exceed $1/2$ in this construction.

To overcome the problem, we use the perturbation theory approaches of \cite{KKR06,bravyi2011schrieffer}.
In particular, we use first-order perturbation theory, either using the general Schrieffer-Wolf transform framework of \cite{bravyi2011schrieffer} 
or a more first-principles approach via the Projection Lemma.
The main idea is to use a large energy penalty term to rule out all low-energy states which do not look like ``history states''.
We then show that the corresponding guiding state can be chosen as the semi-classical subset state introduced in \cite{gharibian2021dequantizing} (see Definition~\ref{def:subsetstate} in Section~\ref{sec:preliminaries}).
To obtain this, we notice that the ground state of our Hamiltonian is gapped and unique. This is because we are doing a reduction from \BQP\ (as opposed to \QMA). In other words,
there is no \QMA\ ``proof'' to be plugged into the history state construction, and therefore there is a unique low-energy history state. 
In sum, via perturbation theory, we are able to \emph{directly} approximate the ground state with a guiding state in \emph{both} YES and NO cases, as opposed to the block encoding approach of~\cite{gharibian2021dequantizing}, which used equally weighted orthogonal subspaces to \emph{separately} encode the YES and NO cases, respectively.

Our second result is \BQP-hardness of $k$-GLH for $k=2$ (Propositions \ref{prop:locality} in Section~\ref{sec:complexity}).
Here, the universal simulation setup of \cite{cubitt2018universal,zhou2021strongly} cannot be directly applied, because
although their results can approximately preserve the ground space of the input Hamiltonian, it was not known whether
semi-classical subset states can be mapped to semi-classical subset states under such simulation frameworks, and the latter is essential for guiding states used in GLH.
We show that this is indeed the case.
In particular, we show that the original semi-classical subset state of the input 5-GLH instance is mapped to a state with polynomially many ancilla qubits in the low-energy subspace of the simulating 2-local Hamiltonian.

Our third result is the \BQP-hardness for physically motivated 2-local Hamiltonians
(Proposition~\ref{prop:non2sld} and Proposition~\ref{prop:AF}).
The main obstacle here is that ground states of physically motivated 2-local Hamiltonians are not known to be guided by semi-classical subset states.
To solve the problem, we introduce another class of semi-classical states which we call {\it semi-classical encoded states} (see Definition~\ref{def:semiclassical} in Section~\ref{sec:preliminaries}).
Intuitively, semi-classical encoded states are states constructed from semi-classical subset states by applying a local isometry on each qubit.
Although semi-classical encoded states are more general than semi-classical subset states, they still allow succinct descriptions and efficient classical sampling algorithms (Lemma~\ref{lemma:sampling}).
For us, it is essential that semi-classical encoded states are closed under the applications of the local encoding of states during the perturbative simulations.
We show that semi-classical encoded states indeed satisfy this property, and therefore can guide ground states of physically motivated 2-local Hamiltonians.
The semi-classical encoded states newly introduced in this paper are of independent interest, and seem to have many other interesting applications.

Finally, our fourth result is to extend the $k$-GLH problem to the question of excited state energy estimation, we call this the \emph{Guided k-Local Hamiltonian Low Energy} ($k$-GLHLE) problem. In Ref.~\cite{PhysRevA.81.032331}, the authors show that determining the $c$th excited state energy of a $k$-local Hamiltonian ($k\geq 3$), where $c = \poly(n)$, is $\QMA$-complete -- even if all the $c-1$ energy eigenstates and corresponding energies are known. In their construction, they embed a $k$-local Hamiltonian $H$, encoding the QMA computation, in a Hamiltonian $H'$ living on a larger Hilbert space. This allows them to add up to polynomial number of  artificial eigenstates to $H'$ below the groundstate of $H$. Finding the $c$'th eigenvalue of $H'$ is then just as hard as finding the groundstate of $H$. We show that this construction translates to the setting with guiding states. 
As a bonus, we also show that the unguided problem is $\QMA$-hard for $k=2$, which was left open in~\cite{PhysRevA.81.032331}.

\paragraph*{Open questions.} There are many open questions surrounding GLH, as well as the more general important goal of solving quantum chemistry problems on quantum computers. For example, we have shown \BQP-hardness of GLH for physically motivated Hamiltonians such as those with Heisenberg
interactions. An important next step would be to show \BQP-hardness for the specific types of fermionic Hamiltonians which are currently being studied in the quantum chemistry literature. Another subtle but important point is that, technically, the level of precision required for GLH in quantum chemistry scales as $1/n$, while
the hardness promise gap scales as $o(1/n)$
in \cite{gharibian2021dequantizing} and the present paper.
Can this be improved to $\Theta(1/n)$? A positive resolution to the quantum PCP conjecture would presumably, in turn, allows one to obtain hardness for gap $\Theta(1)$. Absent this, we are unaware of any circuit-to-Hamiltonian construction which is able to achieve $O(1/n)$ promise gap. Moreover, as mentioned earlier, the main bottleneck for quantum chemistry on quantum computers is the arduous task of \emph{finding} a good guiding state (if it even exists!). Can good heuristics be designed for this? Efforts to date suggest the answer so far is negative~\cite{C22}. Finally, more interestingly (but more challengingly), can one show \emph{rigorous} poly-time guiding-state computation algorithms for the specific families of Hamiltonians considered in the quantum chemistry literature?

\section{Preliminaries}\label{sec:preliminaries}
\paragraph*{Notation} We denote by $[M]$ the set $\{1,\dots,M\}$. We write $\lambda_i(A)$ to denote the $i$th eigenvalue of a Hermitian matrix $A$, ordered in non-decreasing order, with $\lambda_0(A)$ denoting the smallest eigenvalue (ground energy).  We denote $\text{eig}(A) = \{ \lambda_0(A),\dots,\lambda_{\text{dim}(A)-1}(A)\}$ for the (ordered) set of all eigenvalues of $A$.

\subsection{Semi-classical states}

In this section, we formally introduce the guided local Hamiltonian problem.
We first define two classes of semi-classical states.
The term ``semi-classical'' is motivated by the requirement for such states that they should be efficiently described (as an input of the problem) and efficiently samplable.\footnote{
The requirement of sampling access for a guiding state is motivated by the existence of an efficient classical algorithm for the GLH problem with constant precision, given a guiding state with sampling access, as shown in~\cite{gharibian2021dequantizing}.
One type of a semi-classical state we use in this paper is a polynomial-size variant of the notion of subset states, first introduced in~\cite{grilo2015qma}.
}

\begin{definition}[Semi-classical subset state]\label{def:subsetstate}
We say that a normalized state $\ket{u} \in \mathbb{C}^{2^n}$ is a semi-classical subset state if there is a subset $S\subseteq \{0,1\}^n$ with $|S|=\poly(n)$ such that
\[
\ket{u}= \frac{1}{\sqrt{|S|}}\sum_{x\in S} \ket{x}.
\]
\end{definition}

A semi-classical subset state can be efficiently described by the description of $S$.
It is clear that we can efficiently sample from the probability distribution that outputs $x\in \{0,1\}^n$ with probability $|\langle x | u\rangle|^2$, i.e. according to the uniform distribution over $S$.

We next introduce a generalized version of a semi-classical subset state.

\begin{definition}[Semi-classical encoded state]\label{def:semiclassical}
We say that a normalized state $\ket{u} \in \mathbb{C}^{2^m}$, for $n<m \in\mathcal{O}(n)$, is a semi-classical encoded state if there is a subset $S\subseteq \{0,1\}^n$ with $|S|=\poly(n)$ and a set of isometries $V_1, V_2,...,V_n$, where each of $V_i$ maps a $1$-qubit state to an $\mathcal{O}(1)$-qubit state, such that
\[
\ket{u}= \frac{1}{\sqrt{|S|}}\sum_{x\in S} V_1(\ket{x_1})\otimes V_2(\ket{x_2}) \otimes \cdots \otimes V_n(\ket{x_n}).
\]
\end{definition}

A semi-classical encoded state is indeed a semi-classical subset state if the encoding is trivial (i.e. $V_1=V_2=\cdots=V_n=I$).
A semi-classical encoded state can be described by the description of $S$ and isometries $V_1,V_2...,V_n$.
We can also efficiently sample from the semi-classical encoded state as we show in the following lemma.

\begin{lemma}\label{lemma:sampling}
    Given the description of an $m$-qubit semi-classical encoded state $\ket{u}$, we can classically efficiently sample from the probability distribution
   that outputs $x\in\{0,1\}^m$ with probability $|\langle x|u\rangle|^2$.
\end{lemma}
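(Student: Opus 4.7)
The plan is to sample coordinate by coordinate along the tensor decomposition induced by the encoding isometries. Writing each $V_i$ as a map into $m_i = \mathcal{O}(1)$ qubits so that $m = \sum_i m_i$, I partition any putative outcome $y \in \{0,1\}^m$ as $y = (y_1,\ldots,y_n)$ with $y_i \in \{0,1\}^{m_i}$. The sampler then uses the chain rule: sample $y_1$ from its marginal, then $y_2$ from $p(y_2 \mid y_1)$, and so on. Since each $m_i$ is constant, each block has only $\mathcal{O}(1)$ possible values, so the whole procedure reduces to computing the joint marginal $p(y_1,\ldots,y_k)$ efficiently for any prefix.

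A tempting but wrong first approach would be to sample $z$ uniformly from $S$ and then independently measure each block of the product state $\bigotimes_i V_i(\ket{z_i})$: this yields a mixture of product distributions rather than $|\langle y|u\rangle|^2$, missing the interference between different $z \in S$. The correct way to handle this interference is through the double-sum expansion
\[
p(y_1,\ldots,y_k) \;=\; \frac{1}{|S|} \sum_{z,z' \in S} \prod_{i=1}^{k} \langle y_i | V_i(\ket{z_i})\rangle\,\overline{\langle y_i | V_i(\ket{z'_i})\rangle}\, \prod_{i=k+1}^{n} \langle V_i(\ket{z'_i}) | V_i(\ket{z_i})\rangle,
\]
obtained by marginalizing $|\langle y|u\rangle|^2$ over $y_{k+1},\ldots,y_n$. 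Since each $V_i$ is an isometry, the inner product in the second factor equals $\delta_{z_i, z'_i}$, which kills every pair $(z,z')$ that disagrees on the last $n-k$ coordinates.

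Consequently the double sum is supported on at most $|S|^2 = \poly(n)$ surviving pairs, each contributing a product of at most $n$ constant-size inner products evaluable in $\mathcal{O}(1)$ time from the classical description of the $V_i$; hence $p(y_1,\ldots,y_k)$ is computable in $\poly(n)$ time, and the conditional probability $p(y_k \mid y_1,\ldots,y_{k-1})$ is a ratio of two such quantities. Since there are $n$ block-sampling steps, each with $\mathcal{O}(1)$ possible outcomes, the total running time is $\poly(n)$. The main subtlety I expect is the point flagged above: recognizing that the naive product-sampling strategy fails and that the correct distribution requires the full double sum over $S \times S$. Once that is isolated, the isometry condition $V_i^\dagger V_i = I$ does all the remaining work by collapsing the out-summed factors to Kronecker deltas.
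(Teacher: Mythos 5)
Your proposal is correct and follows essentially the same route as the paper's proof: compute prefix marginals efficiently (via the double sum over $S\times S$ of factorized inner products, which the paper compresses into the remark that $|S|=\poly(n)$ and each term is a product of $\mathcal{O}(1)$-qubit states) and then sample sequentially via the chain rule. You sample block by block where the paper samples qubit by qubit, and your observation that the isometry condition collapses the out-summed factors to Kronecker deltas is a nice extra simplification, but neither is needed for the efficiency bound — the double sum already has only $|S|^2=\poly(n)$ terms.
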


\begin{proof}
Assume we are given the description, $S\subseteq\{0,1\}^n$ and $V_1,V_2,...,V_n$, of the semi-classical encoded state
\[
\ket{u}= \frac{1}{\sqrt{|S|}}\sum_{x\in S} V_1(\ket{x_1})\otimes V_2(\ket{x_2}) \otimes \cdots \otimes V_n(\ket{x_n}).
\]
Let $P(y_0,y_1,...,y_{i-1})=|(\bra{y_0,y_1,...,y_{i-1}}\otimes I)\ket{u}|^2$ be the probability that the  measurement outcome of the first $i$ qubits of $\ket{u}$ in the computational basis is $y_0,y_1,...,y_{i-1}$.
For each $i\in[m]$, we can efficiently calculate $P(y_0,y_1,...,y_{i-1})$ because $|S|=\poly(n)$ and $V_1(\ket{x_1})\otimes V_2(\ket{x_2}) \otimes \cdots \otimes V_n(\ket{x_n})$ is a product state of $\mathcal{O}(1)$-qubit states.
Then, we can also efficiently calculate the conditional probability
\[
P(z|y_0,y_1,...,y_{i-1})= \frac{P(y_0,y_1,...,y_{i-1},z)}{P(y_0,y_1,...,y_{i-1})}.
\]
If the bits $y_0,y_1,...,y_{i-1}$ have already been sampled, we compute $P(z|y_0,y_1,...,y_{i-1})$ and sample the next bit by tossing the coin with bias $P(0|y_0,y_1,...,y_{i-1})$.
In this way, we can classically efficiently sample from the probability distribution that outputs $x$ with probability $|\langle x|u\rangle|^2$.

\end{proof}

\subsection{Non-2SLD Hamiltonian and geometry of interaction}
To state the result,
we introduce some families of Hamiltonians.
Given a set of (at most) two-body interactions $\mathcal{S}=\{h_\alpha\}$, $\mathcal{S}$-Hamiltonian refers to the family of Hamiltonians that can be written in the form
\begin{equation}
\label{eq:s-hamiltonian}
H = \sum_{\langle i,j\rangle\in E} J_{i,j} h_{\alpha_{i,j}}^{(i,j)},
\end{equation}
where $J_{i,j} \in \mathbb{R}$, $h_{\alpha_{i,j}}^{(i,j)}$ is two-local interaction chosen from $\mathcal{S}$ and $E$ is the set of edges that represents the connectivity of interaction~\cite{cubitt2016complexity}.
If the connectivity of two-body interaction is restricted to a 2D square lattice, we call such a family $\mathcal{S}$-Hamiltonian on a 2D square lattice.
We also introduce the notion of 2SLD
and non-2SLD:
\begin{definition}[2SLD interaction~\cite{cubitt2016complexity}]
\label{def:non2sld}
Suppose $\mathcal{S}$ is a set of interactions at most 2 qubits. We say that $\mathcal{S}$ is 2SLD if there exists $U \in {\rm SU}(2)$, such that for all $h_i \in \mathcal{S}$,
\[
U^{\otimes 2} h_i (U^\dagger)^{\otimes 2} = \alpha_i Z\otimes Z + A_i \otimes I + I \otimes B_i,
\]
where $\alpha_i \in \mathbb{R}$ and $A_i, B_i$ are arbitrary single-qubit Hamiltonians.
\end{definition}
A set $\mathcal{S}$ is non-2SLD if it is not 2SLD. In particular, such non-2SLD $\mathcal{S}$ includes the
following physically motivated\footnote{For clarity, in \cite{cubitt2016complexity} and here, all hardness results require \emph{non-uniform} weights on constraints. It is an open question whether one can obtain (say) QMA-hardness results with uniform (i.e. unit weight) constraints for such models. This remains an interesting open question, as many-body physicists typically utilize unit weights to model physical systems.} Hamiltonians:
\begin{itemize}
    \item $\{Z,X,Z Z,X X\}$
    \; ($ZZXX$ interaction~\cite{biamonte2008realizable})
   \item $\{Z,X,Z X,X Z\}$
    \; ($ZX$ interaction~\cite{biamonte2008realizable})
    \item $\{XX + Y Y\}$
    \; (general $XY$ interaction)
    \item $\{XX + YY + ZZ\}$ (general Heisenberg interaction).
\end{itemize}
If there is only a single type of interaction (like $\mathcal{S}=\{X X + Y  Y + ZZ\}$), the Hamiltonian is called {\it semi-translationally-invariant}. (Interaction strength can differ in each term.)

\paragraph*{Restriction on the sign of the interaction.}
We also introduce a further restricted class of $\mathcal{S}$-Hamiltonian in which all the signs of the coefficients are promised to be non-negative (i.e. all of $J_{i,j}$ in eq.~\eqref{eq:s-hamiltonian} must satisfy $J_{i,j}\geq 0$).
We call such a family of Hamiltonians as $\mathcal{S}^+$-Hamiltonian following~\cite{piddock2017complexity}.
In~\cite{piddock2017complexity}, the following results are shown:
\begin{itemize}
    \item $\{\alpha X X + \beta Y Y + \gamma Z Z\}^+$-Hamiltonian is QMA-complete if $\alpha+\beta>0$, $\alpha+\gamma>0$ and $\beta+\gamma>0$ hold.
    \item $\{\alpha XX + \beta Y Y + \gamma Z Z\}^+$-Hamiltonian is QMA-complete if the interactions are restricted to the edges of a 2D triangular lattice if
    $\alpha XX + \beta Y Y + \gamma Z Z$ is not proportional to $ X X + YY + Z Z$ in addition to the condition that
    $\alpha+\beta>0$, $\alpha+\gamma>0$ and $\beta+\gamma>0$ hold.
\end{itemize}
The first type of $\mathcal{S}^+$-Hamiltonian includes the antiferromagnetic Heisenberg model ($\{X X+Y Y + Z Z\}^+$-Hamiltonian) and the antiferromagnetic $XY$ model ($\{X X+Y Y \}^+$-Hamiltonian) as important special cases.
The antiferromagnetic $XY$ model (unlike the antiferromagnetic Heisenberg model) remains QMA-complete if
its geometric interaction is restricted to a 2D triangular lattice as it is included in the second type of $\mathcal{S}^+$-Hamiltonian above.

\section{GLHLE hardness constructions}\label{sec:complexity}



We next define the guided local Hamiltonian low energy (GLHLE) problem, which can be viewed as a generalization of GLH by considering arbitrary eigenstates of Hamiltonians\footnote{This definition of GLH is very similar to the definition of $\textup{GLH}^*(k,a,b,\delta)$ in~\cite{gharibian2021dequantizing}. The difference is that while the guiding states used in~\cite{gharibian2021dequantizing} are restricted to semi-classical subset states (Definition~\ref{def:subsetstate}), in our definition we use the more general concept of semi-classical
encoded states (Definition~\ref{def:semiclassical}).
Note that our \BQP-hardness result for general 2-local Hamiltonians (Proposition~\ref{prop:locality}) actually holds even when the guiding state is a semi-classical subset state. Proposition~\ref{prop:locality}, which optimally improves both the locality and fidelity parameters of \cite{gharibian2021dequantizing}, therefore holds in exactly the same setting as~\cite{gharibian2021dequantizing}. We use semi-classical encoded states only to show \BQP-hardness for further restricted families of Hamiltonians (Propositions~\ref{prop:non2sld} and~\ref{prop:AF}).}. For an $n$-qubit Hamiltonian $H$, we denote $\Pi_c$ the projector onto the space spanned by the states of $H$ that have energy $\lambda_c(H)$.

\begin{definition} [Guided Local Hamiltonian Low Energy] $\glh$
\label{def:GLHLE}
\begin{description}
\item[Input:] A $k$-local Hamiltonian $H$ on $n$ qubits such that $\|H\|\leq 1$ and the description of a semi-classical encoded state $\ket{u}\in \mathbb{C}^{2^n}$, a constant $c\in \mathbb{N}_\geq{0}$.
\item[Promise:]  $\|\Pi_{c}\ket{u}\|^2 \geq \delta$,
where $\Pi_{c}$ denotes the projection on the subspace spanned by the $c$th eigenstates, ordered in order of non-decreasing energy, of $H$, and either $\lambda_c(H) \leq a$ or $\lambda_c(H) \geq b$ holds.
\item[Goal:] Decide whether $\lambda_c(H)\leq a$ or $\lambda_c(H) \geq b$.
\end{description}
\end{definition}
The proof of Theorem~\ref{thm:main} consists of five parts: first, we show that 5-local GLH with $\delta = 1-\Omega(1/\poly(n))$ fidelity is $\BQP$-hard. 
Then, we show how to extend this result to the $\BQP$-hardness of the 6-local $\text{GLHLE}$ problem. 
Next, we improve the locality parameter and show a reduction from 6-local $\text{GLHLE}$ to 2-local $\text{GLHLE}$. Simultaneously we show that this also holds when we restict the Hamiltonians to be non$-2SLD$ $\mathcal S$-Hamiltonian on a $2D$ square lattice.  Finally, we show that $\BQP$-hardness persists if we restrict the family of Hamiltonians to be $\{XX + YY + ZZ\}^+$-Hamiltonians, or $\{XX + YY\}^+$-Hamiltonians on a $2D$ triangular lattice.

We state these five parts as propositions and prove them one by one, from this our main result (Theorem~\ref{thm:main}) follows. 

\subsection{Increasing the allowed fidelity}
The first proposition focuses on increasing the allowed fidelity of the guiding state with the ground state of the Hamiltonian of interest (and hence $c=0$). 
\begin{proposition}
\label{proposition:overlap}
For any $\delta\in (0,1-\Omega(1/\poly(n)))$, there exist $a,b\in [0,1]$ with $b-a\in \Omega(1/\poly(n))$ such that the problem $\text{GLHLE}(5,0,a,b,\delta)$ is $\BQP$-hard.
Moreover, it is still $\BQP$-hard with the additional two promises that
\begin{enumerate}
    \item $H$ has a non-degenerate ground state separated from the first excited state by a spectral gap $\gamma\in \Omega(1/\poly(n))$ in both the cases $\lambda_0(H) \leq a$ and $\lambda_0(H) \geq b$.(We call such instances $\gamma$-gapped $\GLHdes(k,a,b,\delta)$.)
    \item  The guiding state is restricted to be a semi-classical subset state.
\end{enumerate}
\end{proposition}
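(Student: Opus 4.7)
The plan is to build, for an arbitrary $\BQP$ language $L$, a single Kitaev-style circuit-to-Hamiltonian $H$ whose unique low-energy state approximates the same history state $\ket{\psihist}$ in \emph{both} YES and NO cases, with the only difference being the expectation of $\hout$ on that state. The conceptual point---absent from the block-encoded construction of \cite{gharibian2021dequantizing}---is that in the $\BQP$ setting there is no quantum witness, so the zero-eigenspace of $\hin+\hprop$ is one-dimensional (spanned by the history state with initial workspace $\ket{0^n}$), and thus a single semi-classical subset state can guide the ground state with fidelity arbitrarily close to $1$ in both cases, sidestepping the intrinsic $\tfrac{1}{2}$ barrier of \cite{gharibian2021dequantizing}.

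Concretely, given a $\BQP$ verifier $V_x=V_T\cdots V_1$ with error at most $2^{-n}$, I would form the $5$-local Hamiltonian $H = J(\hin + \hprop) + \hout$ with polynomial weight $J$ and $\hout = \tfrac{1}{T+1}\ket{0}\!\bra{0}_{\textup{out}}\otimes\ket{T}\!\bra{T}_{\textup{clock}}$, then renormalize so that $\|H\|\leq 1$. A direct computation gives $\bra{\psihist}H\ket{\psihist} = (1-p_{\textup{acc}})/(T+1)$, which is at most $a := 2^{-n}/(T+1)$ in YES instances and at least $b := (1-2^{-n})/(T+1)$ in NO instances, opening the required $\Omega(1/\poly(n))$ promise gap. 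To turn this expectation into a bound on $\lambda_0(H)$ with a $\gamma\in\Omega(1/\poly(n))$ gap to $\lambda_1(H)$, I would apply first-order perturbation theory---either the Schrieffer-Wolff framework of \cite{bravyi2011schrieffer} or the Projection Lemma of \cite{KKR06}: Kitaev's $\Omega(1/T^2)$ spectral gap for $\hin+\hprop$, together with the $\Omega(1/T)$ separation $\hin$ induces between $\ket{\psihist}$ and every other history state $\ket{\psi_{\textup{hist}}^x}$ for $x\ne 0^n$, ensures (for $J$ a suitably large polynomial) that the ground state of $H$ is non-degenerate, is $1/\poly(n)$-close to $\ket{\psihist}$, and is separated from the rest of the spectrum by $\Omega(1/\poly(n))$ in both YES and NO cases.

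For the guiding state, I would invoke the semi-classical subset-state construction of \cite{gharibian2021dequantizing}: via a suitable structural choice of the $\BQP$ verifier (one whose Feynman--Kitaev history state is $1/\poly$-close to a uniform superposition over a polynomial-sized subset of computational basis strings), one obtains $\ket{u}$ with $|\langle u|\psihist\rangle|^2 \geq 1-1/\poly(n)$. Composed with the perturbative bound from the preceding paragraph, this yields $\|\Pi_0\ket{u}\|^2 \geq 1 - 1/\poly(n)$, and by tuning the polynomial slacks one meets any prescribed $\delta \leq 1 - \Omega(1/\poly(n))$. The main obstacle is the simultaneous control of these two approximations in the \emph{NO} case: there $\ket{\psihist}$ sits at energy $\Theta(1/T)$ rather than near zero, so the usual ``ground-state perturbation'' bound does not apply verbatim, and the subset-state approximation must be sharper than the perturbative drift of the ground state away from $\ket{\psihist}$. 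Handling this requires exactly the careful low-energy-subspace argument that the Schrieffer-Wolff / Projection Lemma machinery provides, after which the rest of the proof is a careful bookkeeping of polynomial slacks to match any target $\delta$ and $\gamma$.
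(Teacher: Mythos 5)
Your overall architecture is essentially the paper's: build a single Kitaev circuit-to-Hamiltonian $H = \Delta H_0 + \hout$ directly from the $\BQP$ verifier (no witness register, so the zero-eigenspace of $H_0$ is one-dimensional, spanned by $\ket{\psihist}$), apply first-order perturbation theory (Schrieffer--Wolff or the Projection Lemma) to obtain a non-degenerate ground state $1/\poly$-close to $\ket{\psihist}$ with an inverse-polynomial spectral gap in both the YES and NO cases, and read the promise gap off $\bra{\psihist}\hout\ket{\psihist}$.

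The one genuine gap is the step you describe only as ``a suitable structural choice of the $\BQP$ verifier (one whose Feynman--Kitaev history state is $1/\poly$-close to a uniform superposition over a polynomial-sized subset of computational basis strings).'' For a generic $\BQP$ verifier this does not hold: the clock-$t$ snapshot of $\ket{\psihist}$ for $t\geq 1$ is $\tilde{U}_t\cdots\tilde{U}_1\ket{x}\ket{0\cdots0}$, which is generally not a computational basis state, so $\ket{\psihist}$ has no reason to be close to any subset state, let alone overlap one with fidelity $1-1/\poly$. The paper's device is to \emph{pre-idle} the circuit: set $\tilde{U}_x = U_x I\cdots I$ with $N$ leading identity gates, so $M=m+N$. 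Then the clock-$t$ snapshots for $1\leq t\leq N$ are all $\ket{x}\ket{0\cdots 0}\ket{t}$, and $\ket{u} = \frac{1}{\sqrt{N}}\sum_{t=1}^{N} \ket{x}\ket{0\cdots 0}\ket{t}$ is a bona fide semi-classical subset state with $|\langle u|\psihist\rangle|^2 = N/(m+N+1)$, which is pushed to $1-1/\poly(n)$ by taking $N\in\poly(n)$ large enough. Without this (or an equivalent device), the claim that the history state overlaps $1-1/\poly$ with a subset state is unjustified, and item~2 of the proposition is not met.

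A smaller point: the NO-case worry you raise at the end is not actually an obstacle. Perturbation theory is applied with $H_0 = \hin + \hprop + \hstab$ as the unperturbed Hamiltonian and $\hout$ as the perturbation; the unperturbed ground state $\ket{\psihist}$ has zero $H_0$-energy in \emph{both} YES and NO, so the Schrieffer--Wolff eigenvector bound $\|\ket{g}-\ket{\psihist}\|\in \mathcal{O}(\Delta^{-1}\|\hout\|)$ is case-independent. Only the first-order eigenvalue $\bra{\psihist}\hout\ket{\psihist}$ distinguishes YES from NO, which is exactly what yields the promise gap. (Minor: you also omit $\hstab$; Kitaev's 5-local unary-clock construction needs it to penalize invalid clock configurations, though this is bookkeeping rather than conceptual.)
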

\begin{proof}
Let $\Pi=(\Pi_{\rm yes},\Pi_{\rm no})$ be a promise problem in $\BQP$, and $x\in \{0,1\}^n$ be an input.
Let $U_x= U_m U_{m-1}...U_1$ be a quantum circuit that decides $x$ consisting of $m=\poly(n)$ gates.
$U_x$ acts on $\ket{x}_A\otimes \ket{0...0}_B$ where $A$ denotes the $n$-qubit input register and $B$ denotes the poly-size ancilla register.
By measuring the output register of $U_x\ket{x}_A\otimes \ket{0...0}_B$, the quantum verifier outputs $1$ with probability at least $\alpha$ if $x\in \Pi_{\rm YES}$ (at most $\beta$ if $x\in \Pi_{\rm NO}$, respectively).
We may assume $\alpha = 1-2^{-n}$ and $\beta= 2^{-n}$ via the standard error reduction for $\BQP$.

Consider a pre-idled quantum verifier $\Tilde{U}_x\coloneqq U_xI\cdots I$, where $I$ is the identity gate.
The $\Tilde{U}_x$ consists of $M\coloneqq m + N$ gates, where $N$ is the number of idling steps. ($N=\poly(n)$ is taken properly later.)
Consider Kitaev's~\cite{Kitaev+02} 5-local circuit-to-Hamiltonian construction with an additional scaling factor:
\begin{align}
\label{eq:clock_construction}
H\coloneqq \Delta (\hin+\hprop+\hstab)+ \hout.
\end{align}
Here,
\begin{align}
    &\hin\coloneqq (I-\ket{x}\bra{x})_A\otimes (I-\ket{0...0}\bra{0...0})_B\otimes (\ket{0}\bra{0})_C\\
    &\hout\coloneqq \ket{0}\bra{0}_{\textup{out}} \otimes \ket{M}\bra{M}_C \\
    &\hstab\coloneqq \sum_{j=1}^{M-1} \ket{0}\bra{0}_{C_j} \otimes \ket{0}\bra{0}_{C_{j+1}}\\
    &\hprop\coloneqq \sum_{t=1}^M {H_t}, \rm{where} \\
    &\; H_t \coloneqq -\frac{1}{2}U_t\otimes \ket{t}\bra{t-1}_C -\frac{1}{2} U_t^\dagger\otimes \ket{t-1}\bra{t}_C
    + \frac{1}{2}I\otimes (\ket{t}\bra{t}_C + \ket{t-1}\bra{t-1}_C).
\end{align}

It is known that the non-degenerate and zero-energy ground space of $H_0\coloneqq \hin+\hprop+\hstab$ is spanned by $\ket{\psihist}$, where
\[
\ket{\psihist} \coloneqq
\frac{1}{\sqrt{M+1}} \sum_{t=0}^M \tilde{U_t}\tilde{U}_{t-1}\cdots \tilde{U}_1 \ket{x}_A\otimes\ket{0...0}_B\otimes \ket{t}_C.
\]
 It is also known that the smallest non-zero eigenvalue of $H_0$ is larger than $\pi^2/(64M^2)$ {\cite[Lemma 2.2]{Gharibian+19}} (based on \cite[Lemma 3]{GK12}).

We apply the Schrieffer-Wolf transformation for this $H$ by taking sufficiently large~$\Delta$.
Note that $\hout=\ket{0}\bra{0}\otimes I \otimes \ket{M}\bra{M}$ and $\|\hout\|=1$.
We would take
\[
\Delta \geq 16\cdot 64M^2/\pi^2.
\]
Then, $H$ has a one-dimensional ground space spanned by a ground state $\ket{g}$.
In the following, we analyze the fidelity between $\ket{g}$ and $\ket{\psihist}$, and the eigenvalue of $\ket{g}$ in the YES and NO cases.

\paragraph*{Analysis of fidelity.}
Using Equation
~\eqref{eq:SWoverlap} 
of Appendix
~\ref{section:SWtransform}
, the bound
\[
\|\ket{g}-\ket{\psihist}\| \in \mathcal{O}\left( (\Delta/M^2)^{-1} \right)
\]
holds.
Let us introduce the following state:
\[
\ket{u}\coloneqq \frac{1}{\sqrt{N}} \sum_{t=1}^{N}\ket{x}_A\otimes\ket{0...0}_B\otimes \ket{t}_C.
\]
This is a semi-classical subset state.
This state satisfies
\[
|\langle u|\psihist\rangle|^2 = \frac{N}{m+N+1}.
\]
Therefore, for any positive polynomial $r$, we can take sufficiently large $N, \Delta \in \mathcal{O}(\poly(n))$ so that
$|\langle u|g\rangle|^2 \geq 1-1/r(n)$.

\paragraph*{Analysis of eigenvalue.}
Next, we see the ground state energy of $H$ in both the YES case and the NO case.
The first-order effective Hamiltonian is given by
\[
H_{\rm{eff},1} = \ket{\psihist}\bra{\psihist} \hout \ket{\psihist}\bra{\psihist}.
\]
The history state is defined as
\[
\ket{\psihist}=\frac{1}{\sqrt{M+1}}
\sum_{t=1}^{M}\tilde{U}_t\cdots \tilde{U}_1\ket{x}_A\otimes \ket{0...0}_B\otimes \ket{t}_C
\]
and
\[
\bra{\psihist}\hout\ket{\psihist}
=
\frac{1}{M+1}\bra{x,0}U_x^\dagger (\ket{0}\bra{0}_{\textup{out}}\otimes I)U_x\ket{x,0}.
\]
The eigenvalue of $H_{\rm{eff},1}$ is given by  $\bra{\psihist}\hout\ket{\psihist}$ and
this is $\mathcal{O}((\Delta/M^2)^{-1})=\mathcal{O}(1/\poly(n))$-close to the ground state energy of $H$
using Equation 
\eqref{eq:eigenvalue}.

It can be verified that
$\bra{\psihist}\hout\ket{\psihist}\leq (1-\alpha)/(M+1)$ if $U_x$ accepts $x$ with probability at least $\alpha$ and
$\bra{\psihist}\hout\ket{\psihist}\geq (1-\beta)/(M+1)$ if $U_x$ accepts $x$ with probability at most $\beta$.
As we have mentioned earlier, we can assume $\alpha=1-2^{-n}$ and $\beta=2^{-n}$.
Therefore, the ground state energy $a$ of $H$ lies in the range of $0\pm \mathcal{O}((\Delta/M^2)^{-1})$ if $x\in \Pi_{\textup{yes}}$ and the ground state energy $b$ of $H$ lies in the range of $1/(M+1)\pm \mathcal{O}((\Delta/M^2)^{-1})$ if $x\in \Pi_{\textup{no}}$.

We also see the spectral gap between the ground state and any excited state in both the YES and NO cases.
We first see the NO case.
As we have shown, the ground state energy lies in $\frac{1-2^n}{M+1}\pm \mathcal{O}((M^2/\Delta))$.
In
$H= \Delta (\hin+\hprop+\hstab)+ \hout$,
the eigenvalues of $\Delta (\hin+\hprop+\hstab)$ is perturbed at most $\|\hout\|=1$. Therefore, the smallest non-zero eigenvalue of $H$ is larger than $(\Delta \pi^2)/(64M^2)-1$.
The spectral gap in the NO case is therefore
\[
\mathcal{O}\left(\frac{\Delta}{M^2}\right)-1-\left(\frac{1-2^{-n}}{M+1}+\mathcal{O}\left(\frac{M^2}{\Delta}\right)\right).
\]
The ground state energy in the YES case is smaller than that in the NO case.
Therefore, we can take sufficiently large $\Delta\in\poly(n)$ so that $H$ has inverse-polynomial spectral gap and $b-a\in \Omega(1/\poly(n))$.
Finally, we can normalize $H$ by a polynomially large factor, which concludes the proof. 
\end{proof}
\subsection{Extending to excited states}
The next proposition extends the result to excited states, at the cost of increasing the locality of the construction by one.
\begin{proposition}
\label{proposition:Low Enery states}
 For any $\delta \in \Omega(0,1-1/\text{poly}(n))$ there exist $a,b \in [-1,1]$ with $b-a \in \Omega(1/\text{poly}(n))$ and some number $0 \leq c \leq \text{poly}(n)$ such that $\text{GLHLE}(6,c,a,b,\delta)$ is \BQP-hard even when,
 \begin{enumerate}
     \item the $c$'th eigenvalue of $H$, $\lambda_c(H)$, is non-degenerate and is separated by a gap $\gamma \in \Omega(1/\text{poly}(n))$ from both $\lambda_{c-1}(H)$ and $\lambda_{c+1}(H)$. (We call such instances $\gamma$-gapped \\$\text{GLHLE}(k,c,a,b,\delta)$.)
     \item The guiding state is restricted to be a semi-classical subset state.
 \end{enumerate}
\end{proposition}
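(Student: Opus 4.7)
The plan is to reduce from the 5-local $\gamma$-gapped instance $\GLHdes(5,a,b,\delta)$ given by Proposition~\ref{proposition:overlap}; call the resulting Hamiltonian $H_0$ and the guiding subset state $\ket{u}$. The high-level idea, following~\cite{PhysRevA.81.032331}, is to embed $H_0$ into a 6-local Hamiltonian $H$ that has exactly $c$ ``artificial'' eigenvalues strictly below $\lambda_0(H_0)$, so that $\lambda_c(H)=\lambda_0(H_0)$ and the $c$-th eigenstate is literally a product of $\ket{g}$ (the ground state of $H_0$) with an easy ancillary state. The extra difficulty compared to~\cite{PhysRevA.81.032331} is that the target eigenstate must still be \emph{approximable by a semi-classical subset state}, rather than by an arbitrary $\QMA$-witness.

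Concretely, I introduce a single flag qubit $F$ and a counter register $K$ of $\log_2 c$ qubits (taking $c$ to be a power of two, which is allowed since the proposition only requires the existence of \emph{some} $c\in\poly(n)$). Define the 1-local ``counter'' Hamiltonian on $K$ by
\[H_K \;=\; -C\cdot I + \eta\sum_{k=0}^{\log_2 c-1} 2^{k}\ket{1}\bra{1}_{K_k},\]
whose computational-basis spectrum is exactly $\{-C+j\eta: j=0,\ldots,c-1\}$. Choosing $C=2$ and $\eta=1/(2c)$ places all these eigenvalues in $[-2,-3/2]$, strictly below the spectrum of (the normalized) $H_0$, which lies in $[0,1]$. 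The full Hamiltonian is
\begin{align*}
H &= H_0\otimes I_K\otimes \ket{0}\bra{0}_F \;+\; I_A\otimes H_K \otimes \ket{1}\bra{1}_F \\
&\quad +\; \Delta\sum_{i=1}^{n}\ket{1}\bra{1}_{A_i}\otimes \ket{1}\bra{1}_F \;+\; \Delta'\sum_{k=0}^{\log_2 c -1}\ket{1}\bra{1}_{K_k}\otimes \ket{0}\bra{0}_F,
\end{align*}
where only the first summand is 6-local (the 5-locality of $H_0$ on $A$ plus the flag qubit) and every other term is 2-local. For suitably chosen polynomial $\Delta,\Delta'$, the two penalty terms lift all ``forbidden'' configurations (those with $F=1$ and $A\neq 0\ldots 0$, or with $F=0$ and $K\neq 0\ldots 0$) out of the low-energy sector.

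Because $H$ is block-diagonal in the flag qubit and in the computational basis of the ancillas, its spectrum decomposes cleanly into two families. The $c$ lowest eigenvalues are the counter energies $\{-C+j\eta\}_{j=0}^{c-1}$, attained on the product states $\ket{0\ldots 0}_A\otimes\ket{j}_K\otimes\ket{1}_F$; the $c$-th eigenvalue is $\lambda_c(H)=\lambda_0(H_0)$, attained uniquely on $\ket{g}_A\otimes\ket{0\ldots 0}_K\otimes\ket{0}_F$; and $\lambda_{c+1}(H)=\lambda_1(H_0)$, separated from $\lambda_c(H)$ by exactly the gap $\gamma$ inherited from $H_0$. Dividing $H$ by a polynomial normalization factor yields $\|H\|\leq 1$ while keeping both the promise gap and the spectral gaps immediately above and below $\lambda_c(H)$ in $\Omega(1/\poly(n))$. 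For the guide I simply pad $\ket{u}$ with the correct ancilla configuration, $\ket{u'}=\ket{u}_A\otimes\ket{0\ldots 0}_K\otimes\ket{0}_F$; this is still a semi-classical subset state (append a fixed bit-string to every element of the defining subset of $\ket{u}$), and the exact block-diagonal structure of $H$ yields $|\langle u'|\psi_c\rangle|^2 = |\langle u|g\rangle|^2\geq \delta$, where $\psi_c$ denotes the $c$-th eigenvector of $H$.

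The only delicate step I anticipate is quantitative parameter bookkeeping: the penalties $\Delta,\Delta'$ must be large enough (polynomially in $n$ and $\gamma^{-1}$) to push every forbidden configuration strictly above $\lambda_1(H_0)$, yet small enough that after the polynomial normalization enforcing $\|H\|\leq 1$, the promise gap $b-a$ and both adjacent spectral gaps $\lambda_c-\lambda_{c-1}$ and $\lambda_{c+1}-\lambda_c$ still lie in $\Omega(1/\poly(n))$. This balancing is routine rather than conceptual, and crucially the construction preserves the subset-state form of the guide, so both conditions~(1) and~(2) of the proposition follow simultaneously.
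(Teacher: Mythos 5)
Your proof is correct and follows essentially the same strategy as the paper: introduce a flag qubit that splits the Hamiltonian into a sector carrying the original (5-local, $\gamma$-gapped) $\GLHdes(5,0,a,b,\delta)$ instance and a sector whose diagonal ``counter'' Hamiltonian contributes exactly $c$ artificial eigenvalues strictly below the useful spectrum, so that the $c$th eigenstate of the combined $6$-local Hamiltonian is $\ket{g_0}$ padded by a fixed ancilla string, making $\ket{u}\otimes\ket{\text{ancillas}}$ a semi-classical subset state with unchanged overlap.

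The one implementation-level difference worth noting: the paper avoids your extra $\log_2 c$-qubit counter register and both penalty terms. Instead it reuses the same $n$ qubits in the $F{=}0$ branch, choosing $H^{(z)}=\sum_{i\le d}2^{i}\ketbra{1}{1}_i+\sum_{i>d}2^{d+1}\ketbra{1}{1}_i-(c-\tfrac12)I$ (with $d=\lceil\log_2 c\rceil$), which by direct inspection has exactly $c$ negative eigenvalues for \emph{any} $c\le\poly(n)$ (not just powers of two), spaced by integers; it then rescales $H_0$ into $[-1/4,1/4]$ so the two spectra interleave as desired with no penalty-strength bookkeeping at all. Your version buys nothing over this, but also loses nothing substantive --- it is a valid proof that trades the paper's spectrum engineering for penalty terms. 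One caveat you should make explicit if you write this up: you must argue, as you sketch, that $\Delta,\Delta'$ can be chosen as $O(1)$ constants (since $\|H_0\|\le1$ already), so that the final normalization factor stays $\poly(n)$ and both adjacent gaps and the promise window survive at $\Omega(1/\poly(n))$; that part is routine and you correctly flagged it as such.
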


\begin{proof}
We will reduce directly from the $\BQP$-complete Hamiltonian $H$ as defined in Eq.~\eqref{eq:clock_construction}. Again, let $\ket{u}$ be a semi-classical guiding state such that $\abs{\braket{u}{\psi_0}} \geq \zeta$. Consider the following $6$-local Hamiltonian $H^{(c)}$ on $n+1$ qubits\footnote{Note that this gadget can be trivially changed such that estimating the $n$ highest energy states is $\BQP$-hard.}:
\begin{align}
    H^{(c)} = H^{(z)} \otimes \ket{0} \bra{0} +  H^{(s)} \otimes \ket{1}\bra{1},
    \label{eq:Hc}
\end{align}
where 
\begin{align*}
    &H^{(z)} = \sum_{i=0}^{d} 2^{i} \kb{1}{1}_i + \sum_{i=d+1}^n 2^{d+1}  \kb{1}{1}_i  -\left( c-\frac{1}{2}\right)I,\\
    &H^{(s)} =  \frac{1}{2} \frac{H+I/4}{\norm{H}+1/4} -\frac{1}{4} I,
\end{align*}
where we have that $d= \lceil \log_2(c) \rceil$. $H^{(z)}$ has exactly $c$ states with negative energy, with the smallest eigenvalue being $-c+\frac{1}{2}$ and the largest eigenvalue value at $\sum_{i=0}^d 2^i + \sum_{i=d+1}^{n} 2^{d+1} - \left(c-\frac{1}{2} \right) =  2^{d+1}  + 2^{d+1} (n - d) -\frac{1}{2}-c$. 
The spectrum jumps in integer steps of $1$, and has as largest negative (resp. smallest non-negative) energy value $-\frac{1}{2}$ (resp. $\frac{1}{2}$). Since $\text{eig}(H^{(s)})\in[-1/4,1/4]$, we must have that $H^{(s)}$ sits precisely at the $c$'th excited state level (or $c+1$'th eigenstate level) in $H^{(c)}$. Therefore, given a guiding state $\ket{u}$ for $H$ such that $\abs{\bra{u}{\psi_0}\rangle} \geq \delta$, one has that the guiding state $\lvert u^{(c)} \rangle  = \ket{u} \otimes  \ket{1}$ is also semi-classical and must have $| \langle u^{(c)} \lvert \psi_c^{(c)} \rangle| \geq \delta$, where $\lvert \psi_c^{(c)} \rangle$ denotes the $c$th excited state of $H^{(c)}$. Since this construction of $H^{(c)}$ and $\lvert u^{(c)} \rangle $ provides a polynomial time reduction from  an instance of $\text{GLH}(k,a,b,\delta)$  to one of $\text{GLHLE}(k,c,a,b,\delta)$, whenever $c=\mO(\poly(n))$, we must have that $\text{GLHLE}(k,c,a,b,\delta)$  is $\BQP$-hard whenever $k\geq 6$. The gap between $\lambda_{c}(H^{(c)}) - \lambda_{c-1}(H^{(c)})  = \frac{1}{4}$ and the gap between    $\lambda_{c+1}(H^{(c)}) - \lambda_{c}(H^{(c)})  = \gamma$ as before. The norm of the new Hamiltonian is bounded by $|H^{(c)}| = \mO(\text{poly}(n))$, hence after normalisation we retain $\lambda_{c}(H^{(c)}) - \lambda_{c-1}(H^{(c)})  \geq \lambda_{c+1}(H^{(c)}) - \lambda_{c}(H^{(c)}) = \Omega(1/\text{poly}(n))$.
\qedhere
\end{proof}

\subsection{Locality reduction and reduction to physically motivated Hamiltonians via strong Hamiltonian simulation}
The next two propositions bring (i) the locality $k$ down to $2$ and (ii) extend the result to any of non-2SLD $\mathcal{S}$-Hamiltonian on a 2D square lattice.
\begin{proposition}
\label{prop:locality}
Any $\gamma$-gapped $\text{GLHLE}(k,c,a,b,\delta)$ with $k\in \mathcal{O}(1)$, $b-a\in \Omega(1/\poly(n))$, $\delta \in (0,1-\Omega(1/\poly(n)))$, $0 \leq c \leq \text{poly}(n)$, and $\gamma \in \Omega(1/\poly(n))$
with a guiding  semi-classical subset state
can be reduced to $\gamma'$-gapped $\text{GLHLE}(2,c,a',b',\delta')$ with $b'-a'\in \Omega(1/\poly(n))$, $\delta' \in (0,1-\Omega(1/\poly(n)))$ and $\gamma'\in \Omega(\poly(n))$, and with a guiding semi-classical subset state in polynomial time.
\end{proposition}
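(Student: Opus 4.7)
The plan is to iteratively apply perturbative locality-reduction gadgets --- concretely, the subdivision and $3$-to-$2$ gadgets of Kempe--Kitaev--Regev \cite{KKR06} (which fit within the strong Hamiltonian simulation framework of \cite{cubitt2018universal,zhou2021strongly}) --- to compile the $k$-local Hamiltonian $H$ into a $2$-local Hamiltonian $\tilde H$ on an enlarged Hilbert space $\spa{H}\ot\spa{H}_{\textup{anc}}$ with $m\in\poly(n)$ ancilla qubits. For a perturbation weight $\Delta\in\poly(n)$ chosen sufficiently large, the Schrieffer--Wolff analysis of \cite{bravyi2011schrieffer} guarantees that the low-energy subspace of $\tilde H$ (up through the $(c+1)$'th level) is $\mO(1/\poly(n))$-close, in operator norm, to the block $H\ot\ketbra{0^m}{0^m}$: in particular, the $c$'th eigenvalue and the $c$'th eigenspace projector satisfy $\lambda_c(\tilde H) = \lambda_c(H)+\mO(1/\poly(n))$ and $\tilde\Pi_c = \Pi_c\ot\ketbra{0^m}{0^m} + \mO(1/\poly(n))$.

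First, I would verify that the Schrieffer--Wolff approximation holds not merely for the ground space but for the full low-energy window containing the first $c+1$ eigenstates. This requires choosing $\Delta$ large enough that the ancilla penalty dominates the target spectrum by a polynomial margin. Standard perturbative analysis then yields the spectral closeness above, preserving all $\Omega(1/\poly(n))$ gaps (including those ensuring $\gamma'$-gappedness at the $c$'th level), modulo a polynomial renormalisation bounding $\|\tilde H\|$.

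Second --- and this is the key step --- I would define the new guiding state as $\ket{u'}\coloneqq\ket{u}\ot\ket{0^m}$. Since $\ket{u}=\frac{1}{\sqrt{|S|}}\sum_{x\in S}\ket{x}$ is a semi-classical subset state over some $S\subseteq\{0,1\}^n$ with $|S|\in\poly(n)$, the state $\ket{u'}=\frac{1}{\sqrt{|S|}}\sum_{x\in S}\ket{x,0^m}$ is manifestly a semi-classical subset state over the subset $\{(x,0^m):x\in S\}\subseteq\{0,1\}^{n+m}$ of equal cardinality, with a polynomial-size classical description. Combining the input promise $\|\Pi_c\ket{u}\|^2\ge \delta$ with the approximation $\tilde\Pi_c = \Pi_c\ot\ketbra{0^m}{0^m}+\mO(1/\poly(n))$ yields $\|\tilde\Pi_c\ket{u'}\|^2 \ge \delta - \mO(1/\poly(n))\eqqcolon\delta'$, which remains inside $(0,1-\Omega(1/\poly(n)))$ as required.

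The main obstacle is ensuring that throughout the iterated gadget construction, the ancilla qubits introduced by each layer remain penalised into a \emph{fixed} computational-basis ground state --- as opposed to an entangled state depending on the configuration of the original qubits --- since only under this condition does the tensor product $\ket{u}\ot\ket{0^m}$ qualify as a semi-classical subset state. Fortunately, the KKR-style subdivision and $3$-to-$2$ gadgets are designed precisely so that the perturbation's penalty term pins each new ancilla to $\ket{0}$ in the low-energy subspace, while the Schrieffer--Wolff effective Hamiltonian acts nontrivially only on the original register; this separation is preserved under iteration. The final instance, after polynomial renormalisation of $\tilde H$, yields a valid $\gamma'$-gapped $\text{GLHLE}(2,c,a',b',\delta')$ instance with all parameters as promised.
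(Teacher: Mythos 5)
Your proposal is correct and takes essentially the same route as the paper: it reduces locality via parallel rounds of subdivision and $3$-to-$2$ mediator-qubit gadgets, observes that these attach only $\ket{0}$ ancillas so that $\ket{u}\otimes\ket{0^m}$ remains a semi-classical subset state, and uses first-order perturbation theory (Schrieffer--Wolff / the paper's approximate-Hamiltonian-simulation framework and Lemma~\ref{lemma:groundstates}) to transport the $c$'th eigenvalue, gap, and fidelity with only $\mathcal{O}(1/\poly(n))$ loss after renormalisation. The paper phrases the fidelity step via an eigenvector bound $\|\mathcal{E}_{\textup{state}}(\ket{g})-\ket{g'}\|\leq\eta+\mathcal{O}(\gamma^{-1}\epsilon)$ rather than your projector-closeness bound, but these are equivalent for a non-degenerate gapped level.
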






\begin{proposition}\label{prop:non2sld}
Any $\gamma$-gapped $\text{GLHLE}(k,c,a,b,\delta)$ with $k\in \mathcal{O}(1)$, $b-a\in \Omega(1/\poly(n))$, $\delta \in (0,1-\Omega(1/\poly(n)))$, $0 \leq c \leq \text{poly}(n)$ and $\gamma \in \Omega(1/\poly(n))$, and with a guiding  semi-classical subset state
can be reduced to $\gamma'$-gapped $\text{GLHLE}(2,c,a',b',\delta')$ with $b'-a'\in \Omega(1/\poly(n))$, $\delta' \in (0,1-\Omega(1/\poly(n)))$ and $\gamma'\in \Omega(\poly(n))$ in polynomial time
whose Hamiltonian is restricted to any of non-2SLD $\mathcal{S}$-Hamiltonian on a 2D square lattice.
\end{proposition}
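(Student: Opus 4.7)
The plan is to compose the locality reduction of Proposition~\ref{prop:locality} with a universal Hamiltonian simulation tailored to non-2SLD families on the 2D square lattice. First, I would apply Proposition~\ref{prop:locality} to produce a 2-local $\gamma$-gapped $\text{GLHLE}(2,c,a_1,b_1,\delta_1)$ instance $(H,\ket{u})$ whose guiding state $\ket{u}$ is a semi-classical subset state, retaining all promises up to inverse-polynomial losses. Then I would invoke the universal simulation framework of Cubitt--Montanaro--Piddock~\cite{cubitt2016complexity} (refined in~\cite{zhou2021strongly}), which, for any non-2SLD $\mathcal{S}$ on a 2D square lattice, builds an $\mathcal{S}$-Hamiltonian $H'$ on $m=\poly(n)$ qubits together with a local isometry $V=V_1\otimes V_2\otimes\cdots\otimes V_n$ (each $V_i$ mapping one qubit into $\mO(1)$ qubits, and possibly tensored with ancilla mediator qubits in a fixed product state) such that the spectrum of $H'$ restricted to its low-energy subspace is an additive-$\varepsilon$ approximation to $V H V^\dagger$, with $\varepsilon$ controllable by the perturbation strength.

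The key observation is that $V$ acts as a product of single-qubit isometries on the logical part and tensors with a fixed mediator state $\ket{\mathrm{med}}$. Therefore
\[
V\ket{u} \;=\; \frac{1}{\sqrt{|S|}} \sum_{x\in S} V_1(\ket{x_1}) \otimes V_2(\ket{x_2}) \otimes \cdots \otimes V_n(\ket{x_n}) \otimes \ket{\mathrm{med}},
\]
which, after absorbing the fixed $\ket{\mathrm{med}}$ into (extended) local isometries, is exactly a semi-classical encoded state in the sense of Definition~\ref{def:semiclassical}. By Lemma~\ref{lemma:sampling} it admits efficient classical sampling access, as required of a valid guiding state. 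Because $V$ maps each eigenvector of $H$ to an approximate eigenvector of $H'$ with eigenvalue shifted by at most $\varepsilon$, one obtains $|\langle Vu|\widetilde\psi_c\rangle|^2 \geq |\langle u|\psi_c\rangle|^2 - \mO(\varepsilon/\gamma)$, where $\widetilde\psi_c$ and $\psi_c$ denote the $c$th eigenstates of $H'$ and $H$ respectively. The gap around $\lambda_c(H')$ is preserved up to $\mO(\varepsilon)$ by standard perturbation-theoretic arguments.

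The main obstacle is to verify that, under the chosen non-2SLD encoding, the tensor-product structure demanded by Definition~\ref{def:semiclassical} is genuinely preserved. The universal simulations proceed via a cascade of perturbative gadgets (subdivision, fork, three-to-two body, and a final non-2SLD-specific gadget) that introduce fresh mediator qubits at each stage; one must check that after each gadget the composite encoding remains of the form $\bigotimes_i V'_i$ on the logical qubits tensored with mediators in a fixed product state, rather than coupling different logical qubits together. For the Cubitt--Montanaro--Piddock construction this tensor-product form is essentially intrinsic to the definition of the encoding, but the bookkeeping to track all mediator qubits, and to tune the perturbation strength so that the cumulative additive error in eigenvalues, spectral gap, and guiding-state fidelity is $o(\min(b_1-a_1,\gamma))$, is the delicate step. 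After renormalizing $H'$ to unit operator norm one obtains thresholds $a',b'$ with $b'-a'\in\Omega(1/\poly(n))$, gap $\gamma'\in\Omega(1/\poly(n))$, and fidelity $\delta'\in(0,1-\Omega(1/\poly(n)))$, completing the reduction.
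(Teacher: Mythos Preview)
Your proposal is essentially correct and follows the same approach as the paper: invoke the strong universal simulation of~\cite{zhou2021strongly,cubitt2018universal} to produce the non-2SLD $\mathcal{S}$-Hamiltonian on a 2D square lattice, use a gapped-eigenstate perturbation lemma to transfer the fidelity bound, and verify step-by-step that the composite encoding sends the subset guiding state to a semi-classical encoded state. Two minor discrepancies: the paper applies the universal simulation directly to the $k$-local instance (your preliminary pass through Proposition~\ref{prop:locality} is harmless but redundant, since the simulation chain must still enforce spatial sparsity afterward), and one stage of the chain attaches a semi-classical \emph{subset} state $\ket{\alpha}$ to the ancilla register rather than a product state~$\ket{\mathrm{med}}$, so that factor must be absorbed into the subset $S$ (via Cartesian product) rather than into the local isometries~$V_i$.
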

\begin{proof}[Proof of Propositions~\ref{prop:locality} and~\ref{prop:non2sld}]
Let $H$ and $\ket{u}$ be arbitrary inputs of $\text{GLHLE}(k,c,a,b,\delta)$ with $k\in \mathcal{O}(1)$, $b-a\in \Omega(1/\poly(n))$, $\delta \in (0,1-\Omega(1/\poly(n)))$.
From Theorem
~\ref{them:stronguniversality} 
(in Appendix
~\ref{app:apxHsim}
), 
we can efficiently find a non-2SLD $\mathcal{S}$-Hamiltonian $H'$ on a 2D square lattice that is a strong $(\Delta,\eta,\epsilon)$-simulation of $H$ given the description of $H$.
We take $\epsilon < (b-a)/2$, $b'=b-\epsilon$, $a'=a+\epsilon$ and $\Delta= \mathcal{O}(\epsilon^{-1}\|H\|^2+\eta^{-1}\|H\|)$ so that $\lambda_c(H')\leq a'$ if $\lambda_c(
H)\leq a$, and $\lambda_c({H'})\geq b'$ if $\lambda_c({H})\geq b'$ while $b'-a' \in \Omega(1/\poly(n))$.

We have shown the {\it existence} of desirable eigenvectors in the simulated Hamiltonian.
What remains to show is that
(i) the encoded state of $\ket{u}$ still has $1-1/\poly(n)$ fidelity with $c$'th excited state of $H'$ and
(ii) the encoded state is still a semi-classical subset state after the simulation by a 2-local Hamiltonian (for concluding Proposition~\ref{prop:locality})
, and
(iii) the encoded state is still a semi-classical encoded state after the simulation by an arbitrary non-2SLD $\mathcal{S}$-Hamiltonian on a 2D square lattice (for concluding Proposition~\ref{prop:non2sld}).


\paragraph*{(i) Verification of the fidelity.}
The fidelity can be analyzed by the following lemma:
\begin{lemma}[Simulation of the gapped excited state]
\label{lemma:groundstates}
Suppose the $c$'th excited state $\ket{g}$ of $H$ is non-degenerate and separated from both the $c-1$'th excited state and $c+1$'th excited state by a gap $\gamma$. 
Suppose $H'$ is a $(\Delta, \eta, \epsilon)$-simulation of $H$ such that $2\epsilon<\gamma$. 
Then $H'$ has a non-degenerate $c$'th excited state $\ket{g'}$ and
\[
\| \mathcal{E}_{\textup{state}}(\ket{g}) - \ket{g'} \| \leq \eta + \mathcal{O}(\gamma^{-1}\epsilon).
\]
\end{lemma}

\begin{proof}
  This is a slight modification of Lemma 2 of \cite{bravyi2017complexity}. 
    First, the non-degeneracy of the $c$'th excited state of $H'$ follows because the $i$'th smallest eigenvalues of $H$ and $H'$ differs at most $\epsilon$ for all $0\leq i\leq {\rm dim}(H)-1$, and $\epsilon $ satisfies $2\epsilon<\gamma$. 
    Consider $H$ as an unperturbed Hamiltonian and $V\coloneqq \Tilde{\mathcal{E}}^\dagger H'\Tilde{\mathcal{E}}-H$ as a perturbation. Then, the perturbed Hamiltonian $H+V= \Tilde{\mathcal{E}}^\dagger H'\Tilde{\mathcal{E}}$ has a non-degenerate $c$'th excited state $\Tilde{\mathcal{E}}_{\textup{state}}(\ket{g'})$.
    The first-order perturbation theory for eigenvectors gives $\|\ket{g}-\Tilde{\mathcal{E}}_{\textup{state}}^\dagger(\ket{g'})\|\in\mathcal{O}(\gamma^{-1}\epsilon)$. 
    Therefore, it follows that $\|\Tilde{\mathcal{E}}_{\textup{state}}(\ket{g})-\ket{g'}\|=\|\Tilde{\mathcal{E}}_{\textup{state}}(\ket{g})-\Tilde{\mathcal{E}}_{\textup{state}}(\Tilde{\mathcal{E}}_{\textup{state}}^\dagger(\ket{g'}))\|\in \mathcal{O}(\gamma^{-1}\epsilon)$
    using that $\tilde{\mathcal{E}}_{state}$ is an isometry and $\ket{g'}\in {\rm Im}(\tilde{\mathcal{E}}_{state})$
    . Finally, by using $\|\mathcal{E}_{state}-\tilde{\mathcal{E}}_{state}\|\leq \eta$,
    $
\| \mathcal{E}_{\textup{state}}(\ket{g}) - \ket{g'} \| \leq \eta + \mathcal{O}(\gamma^{-1}\epsilon)
$ follows.
\end{proof}

Using Lemma~\ref{lemma:groundstates},
we can take sufficiently small $\epsilon$ and $\eta$ to ensure $\| \mathcal{E}_{\textup{state}}(\ket{u}) - \ket{g'} \| \leq \delta' = \delta - 1/\poly(n)$.
Because the Hamiltonian simulation is efficient, the operator norm $\|H'\|$ and the number of qubits of $H'$ is in $\poly(n)$.

\paragraph*{(ii) Verification of the semi-classical property for Proposition~\ref{prop:locality}.}
We start from a semi-classical subset state  $\ket{u}=1/\sqrt{|S|}\sum_{x\in S}\ket{x}$.
We show that after the simulation of the original $k$-local Hamiltonian $H$ where $k\in\mathcal{O}(1)$ by an $2$-local Hamiltonian, the corresponding encoding $\mathcal{E}_{\textup{state}}(\ket{u})$ is still a semi-classical subset state.

In order to simulate the $k$-local Hamiltonian by a $2$-local Hamiltonian (that has no restriction on the family of Hamiltonian), it is enough to use mediator qubit gadgets that attach $\ket{0}$ states for mediator qubits (called subdivision and 3-to-2 gadgets~\cite{oliveira2005complexity}).
A $k$-local term can be simulated by $(\lceil k/2 \rceil +1)$-local terms using the subdivision gadget.
Moreover, subdivision gadgets can be applied to each of the terms of the Hamiltonian in parallel~\cite{piddock2017complexity, cubitt2018universal}.
Therefore, we can reduce a $k$-local Hamiltonian to a 3-local Hamiltonian by $\mathcal{O}(\log{k})$ rounds of applications of the subdivision gadgets.
Then we can use the 3-to-2 gadgets in parallel to reduce to a 2-local Hamiltonian.
In the corresponding encoding of states of this procedure, polynomially many $\ket{0}$ states are attached to the original state.
Clearly, by attaching polynomially many $\ket{0}$ states, a polynomial-size subset state is mapped to another polynomial-size subset state:
\[
\frac{1}{\sqrt{|S|}}\sum_{x\in S}\ket{x}\rightarrow \frac{1}{\sqrt{|S|}}\sum_{x\in S}\ket{x}\ket{0}^{\otimes \poly(n)}=
\frac{1}{\sqrt{|S|}}\sum_{x\in S\times \{0...0\}}\ket{x}
.
\]
This concludes the proof of Proposition~\ref{prop:locality}.

\paragraph*{(iii) Verification of the semi-classical property for Proposition~\ref{prop:non2sld}.}
We proceed to show that starting from a semi-classical subset state $\ket{u}$, the resulting state is a semi-classical encoded state when we simulate the original Hamiltonian by a non-2SLD $\mathcal{S}$-Hamiltonian on a 2D square lattice.
There are three types of encodings used in the simulation:
\begin{itemize}
    \item {\bf Mediator qubits.} In this encoding, some simple ancilla states are attached to the original state.
    \item {\bf Subspace encoding.} In this encoding, a local isometry is applied to the original state.
    \item {\bf Local Unitaries.} In this encoding, local unitary $U\otimes U \otimes \cdots \otimes U$, where each of $U$ acts on one qubit, is applied to the original state.
\end{itemize}
We restate the chain of Hamiltonian simulations of Appendix
~\ref{sec:encoding}
:
\begin{screen}
{\bf Arbitrary $k$-local Hamiltonian}\\
\ $\downarrow$ \hspace{5mm}
{(1) Mediator qubits. }
(Attach a semi-classical subset state $\ket{\alpha}$.)
\\
{\bf Spatially sparse $5$-local Hamiltonian}\\
\ $\downarrow$ \hspace{5mm}
(2) Mediator qubits.
(Attach polynomially many $\ket{+_y}$ states.)
\\
{\bf Spatially sparse $10$-local real Hamiltonian}\\
\ $\downarrow$ \hspace{5mm}
(3) Mediator qubits.
(Attach polynomially many $\ket{0}$ or $\ket{1}$ states.)
\\
{\bf Spatially sparse $2$-local Pauli interactions with no $Y$-terms}\\
\ $\downarrow$ \hspace{5mm}
(4) Subspace encoding.
\\
{\bf Spatially sparse  $\mathcal{S}_0=\{XX+YY+ZZ\}$ or $\{XX+YY\}$ Hamiltonian}\\
\ $\downarrow$  \hspace{5mm}
(5) Mediator qubits. (Attach polynomially many $\ket{0}$ or $\ket{1}$ states.)
\\
{\bf$\mathcal{S}_0$-Hamiltonians on a 2D square lattice}\\
\ $\downarrow$ \hspace{5mm}
(6) Mediator qubits, Subspace encoding, and local unitary.
\\
{\bf Arbitrary non-2SLD $\mathcal{S}$-Hamiltonian on a 2D square lattice}
\end{screen}
In step (1), a semi-classical subset state is attached to a semi-classical subset state $\ket{u}$.
The resulting state is also a semi-classical subset state:
\begin{align}
    \ket{u}=\frac{1}{\sqrt{|S|}}\sum_{x\in S}\ket{x} \rightarrow & \frac{1}{\sqrt{|S|}}\sum_{x\in S}\ket{x} \otimes \frac{1}{\sqrt{|S'|}}\sum_{x'\in S'}\ket{x'}\nonumber\\
    &=\frac{1}{\sqrt{|S||S'|}}\sum_{x\in S\times S'}\ket{x}.
\end{align}
The resulting state after the encodings of steps (2)$\sim$(4) is a semi-classical encoded state because in these steps, a tensor product of single-qubit states is attached to a semi-classical subset state and then the state is encoded by a local isometry.
By further performing a local encoding to the semi-classical encoded state, the resulting state is also a semi-classical encoded state.
This concludes the proof of Proposition~\ref{prop:non2sld}.
\end{proof}
Finally, we show a \BQP-hardness result for the antiferromagnetic Hamiltonian.
\begin{proposition}
\label{prop:AF}
For any $\delta\in (0,1-\Omega(1/\poly(n)))$, there exist $a,b\in [0,1]$ with $b-a\in \Omega(1/\poly(n))$ and $0\leq c \leq \mathcal{O}(poly(n))$ such that the problem $\text{GLHLE}(2,c,a,b,\delta)$ with $b-a\in\Omega(1/\poly(n)) $ is $\BQP$-hard
for Hamiltonians that are restricted to either
 $\{XX+YY+ZZ\}^+$-Hamiltonian, or
 $\{XX+YY\}^+$-Hamiltonian on a 2D triangular lattice.
\end{proposition}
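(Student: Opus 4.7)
The plan is to mirror the argument of Proposition~\ref{prop:non2sld}: start from a $\gamma$-gapped $\text{GLHLE}(k,c,a,b,\delta)$ instance with semi-classical subset guiding state, as produced by Propositions~\ref{proposition:overlap} and~\ref{proposition:Low Enery states}, and apply a chain of perturbative gadget simulations that terminates in either an antiferromagnetic Heisenberg Hamiltonian or an antiferromagnetic $XY$ Hamiltonian on a 2D triangular lattice. Unlike Proposition~\ref{prop:non2sld}, where the chain ends at a generic non-2SLD $\mathcal{S}$-Hamiltonian on a 2D square lattice, here we invoke the antiferromagnetic universality results of Piddock--Montanaro~\cite{piddock2017complexity} quoted in Section~\ref{sec:preliminaries}: $\{\alpha XX+\beta YY+\gamma ZZ\}^+$-Hamiltonians are universal whenever $\alpha+\beta$, $\alpha+\gamma$, $\beta+\gamma>0$ (covering the Heisenberg case $\alpha=\beta=\gamma=1$), and on a 2D triangular lattice the same holds as long as the interaction is not proportional to $XX+YY+ZZ$ (covering the antiferromagnetic $XY$ case $\alpha=\beta=1,\gamma=0$).

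The spectral and fidelity bookkeeping is identical to the previous proof. Using Theorem~\ref{them:stronguniversality}, we obtain an antiferromagnetic $H'$ that is a strong $(\Delta,\eta,\epsilon)$-simulation of the input $H$. We pick $\epsilon<(b-a)/2$ and $\Delta=\mathcal{O}(\epsilon^{-1}\|H\|^{2}+\eta^{-1}\|H\|)$ so that $\lambda_c(H')\in [\lambda_c(H)-\epsilon,\lambda_c(H)+\epsilon]$ and a non-degenerate $c$-th eigenvalue of $H'$ persists with a gap in $\Omega(1/\poly(n))$. Setting $a'=a+\epsilon$, $b'=b-\epsilon$ preserves the promise gap $b'-a'\in\Omega(1/\poly(n))$. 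Applying Lemma~\ref{lemma:groundstates}, we choose $\eta$ and $\epsilon$ small enough (but still polynomially computable) that $\|\mathcal{E}_{\textup{state}}(\ket{u})-\ket{g'}\|\leq \delta'$ with $\delta'=\delta-1/\poly(n)$, keeping the required fidelity in $(0,1-\Omega(1/\poly(n)))$. Finally a polynomial renormalization gives the stated bounds.

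The only genuinely new ingredient is the verification that the encoded guiding state $\mathcal{E}_{\textup{state}}(\ket{u})$ remains a semi-classical encoded state throughout the Piddock--Montanaro chain. Inspecting their construction, each gadget step is one of the three elementary encoding primitives already handled in part~(iii) of the proof of Proposition~\ref{prop:non2sld}: attaching tensor-product ancilla states on mediator qubits, applying a local isometry per logical qubit (subspace encoding), and applying a simultaneous single-qubit unitary to every physical qubit. Each of these maps the class of semi-classical encoded states to itself (composing the per-qubit isometries $V_i$ in Definition~\ref{def:semiclassical} with the new local isometry, and extending the indexing subset $S$ by the fixed classical bits of the attached ancillas). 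So the final guiding state admits a succinct classical description and an efficient sampler by Lemma~\ref{lemma:sampling}.

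The step I expect to be the main obstacle is making sure the Piddock--Montanaro chain really decomposes into only these three primitive encodings when one starts from an arbitrary $k$-local input rather than a stoquastic or geometrically constrained one. In particular, reducing from an arbitrary 2-local Pauli Hamiltonian into a purely antiferromagnetic positive-coefficient form without breaking the semi-classical encoded structure requires that every intermediate gadget (including those that realize effective negative couplings from antiferromagnetic interactions) is of mediator/subspace/global-unitary type and not, say, a circuit-to-Hamiltonian gadget that would require a \QMA-style witness register. A careful reading of~\cite{piddock2017complexity} confirms this, but this is the bookkeeping step that deserves the most care.
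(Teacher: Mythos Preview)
Your overall strategy is correct and matches the paper's in spirit: show that the additional perturbative gadgets needed to reach the antiferromagnetic models map semi-classical encoded states to semi-classical encoded states, then reuse the spectral/fidelity bookkeeping via Lemma~\ref{lemma:groundstates}. Two points of comparison are worth making.

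First, the paper takes a shorter route than you propose. Rather than rerunning the entire chain from the $k$-local starting point, it reduces \emph{from} the instances already produced in Proposition~\ref{prop:non2sld}: the signed $\{XX+YY+ZZ\}$-Hamiltonian (for the Heisenberg target) and the signed $\{XX+YY\}$-Hamiltonian on a 2D square lattice (for the $XY$ target). From there only a single layer of Piddock--Montanaro gadgets is needed---the ``basic gadget'' of~\cite{piddock2017complexity} to flip signs in the Heisenberg case, and mediator-qubit gadgets to pass from the square to the triangular lattice with nonnegative couplings in the $XY$ case. This avoids the bookkeeping you flag as the ``main obstacle'': you never have to trace the full chain through the antiferromagnetic constructions, only the last hop.

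Second, your invocation of Theorem~\ref{them:stronguniversality} is a slip: that theorem concerns non-2SLD $\mathcal{S}$-Hamiltonians on a 2D \emph{square} lattice and does not yield $\{XX+YY+ZZ\}^+$ (no geometry restriction) or $\{XX+YY\}^+$ on a \emph{triangular} lattice. You correctly name Piddock--Montanaro as the relevant source, but the simulation you actually need is theirs, not Theorem~\ref{them:stronguniversality}. Relatedly, the paper is explicit that the basic gadget attaches a tensor product of \emph{two}-qubit ancilla states; this is still absorbed by Definition~\ref{def:semiclassical} (each two-qubit ancilla is $V'_i\ket{0}$ for some $1\!\to\!2$ isometry $V'_i$), but it is worth saying rather than folding it into ``mediator qubits'' as if single-qubit ancillas sufficed.
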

\begin{proof}
We first prove the case of $\{XX+YY+ZZ\}^+$-Hamiltonian.
This can be reduced from the $\text{GLHLE}$ problem of $\{XX+YY+ZZ\}$-Hamiltonian with a semi-classical encoded state as a guiding state, which is shown to be \BQP-hard in Proposition~\ref{prop:non2sld}.
The
$\{XX+YY+ZZ\}$-Hamiltonian can be simulated by $\{XX+YY+ZZ\}^+$-Hamiltonian using the ``basic gadget'' (this is a type of a mediator qubit gadget) of~\cite{piddock2017complexity}.
In the corresponding encoding of the state, a tensor product of two-qubit states is attached to the original state. This encodes a semi-classical encoded state to another semi-classical encoded state. 
The reason is as follows. Let us denote 
the attached tensor product of polynomially many two-qubit states as 
\[
\ket{\phi_1}\otimes \ket{\phi_2} \otimes \cdots \ket{\phi_m} = V'_1\ket{0} \otimes V'_2\ket{0} \otimes \cdots V'_m \ket{0},
\]
where $\ket{\phi_1},...,\ket{\phi_m}$ are two-qubit states and $V'_1,...,V'_m$ are isometries such that $V'_i\ket{0}=\ket{\phi_i}$ for each $i\in[m]$. 
Then, the original semi-classical encoded state represented by a polynomial-size subset $S$ and a local isometry $V_1\otimes V_2 \otimes \cdots \otimes V_n$ is mapped to a semi-classical encoded state represented by a subset $S\times \{0...0\}$ and a local isometry $V_1\otimes \cdots \otimes V_n\otimes V'_1\otimes \cdots \otimes V'_m$. This concludes the case of $\{XX+YY+ZZ\}^+$-Hamiltonian.

We next show the \BQP-hardness of the $\text{GLHLE}$ problem of
 $\{XX+YY\}^+$-Hamiltonian on a 2D triangular lattice with a semi-classical encoded state.
 We show a reduction from the $\text{GLHLE}$ problem of $\{XX+YY\}$-Hamiltonian on a 2D square lattice with a semi-classical encoded state as a guiding state, which is shown to be \BQP-hard in Proposition~\ref{prop:non2sld}.
 It is shown in~\cite{piddock2017complexity} how to simulate  $\{XX+YY\}$-Hamiltonian on a 2D square lattice by  $\{XX+YY\}^+$-Hamiltonian on a 2D triangular lattice by using mediator qubit gadgets. The corresponding encoding is just attaching a product state of polynomially many $\mathcal{O}(1)$-qubit states to the original guiding state. Therefore, the original semi-classical encoded state is mapped to another semi-classical encoded state (by a similar reason as in the case of $\{XX+YY+ZZ\}^+$-Hamiltonian).
\end{proof}

\bibliography{lipics-v2021-sample-article}

\appendix

\section{Approximate Hamiltonian simulation}

\subsection{Introduction of approximate Hamiltonian simulation}
\label{app:apxHsim}
While in the QMA-hardness reduction it suffices to focus only on the eigenvalues in the simulation,
in the reduction of GLH it is also important to know how the eigenvectors change in the perturbative simulation.
It is convenient to introduce the notion of {\it approximate Hamiltonian simulation} to show the reduction of GLH.

\begin{definition}[Approximate Hamiltonian simulation~\cite{cubitt2018universal},~\cite{zhou2021strongly}]\label{def:approxsimulation}
We say that an $m$-qubit Hamiltonian $H'$ is a $(\Delta,\eta,\epsilon)$-simulation of an $n$-qubit Hamiltonian $H$ if there exists a local encoding $\mathcal{E}(M)=V(M\otimes P+ \bar{M}\otimes Q)V^\dagger$ such that
\begin{enumerate}
    \item There exists an encoding $\Tilde{\mathcal{E}}(M)=\tilde{V}(M\otimes P+ \bar{M}\otimes Q)\tilde{V}^\dagger$ such that
    $\tilde{\mathcal{E}}(\mathbbm{1})=P_{\leq \Delta(H')}$ and $\|\tilde{V}-V\|\leq \eta$, where $P_{\leq \Delta(H')}$ is the projector onto the subspace spanned by eigenvectors of $H'$ with eigenvalue below $\Delta$,
    \item $\|H'_{\leq \Delta}-\tilde{\mathcal{E}}(H)\|\leq \epsilon$, where $H'_{\leq \Delta}\coloneqq P_{\leq \Delta(H')}H'$.
\end{enumerate}
Here, $V$ is a local isometry that can be written as $V=\bigotimes_iV_i$ where each $V_i$ is an isometry acting on at most 1 qubit, and $P$ and $Q$ are locally orthogonal projectors (i.e. for all $i$ there exist orthogonal projectors $P_i$ and $Q_i$ acting on the same subsystem as $V_i$ such that $P_iQ_i=0$, $P_iP=P$ and $Q_iQ=Q$) such that $P+Q=I$, and $\bar{M}$ is the complex conjugate of $M$.
Moreover, we say that the simulation is efficient if $m$ and $\|H'\|$ are at most $\mathcal{O}(\poly(n,\eta^{-1},\epsilon^{-1},\Delta))$, and the description of $H'$ can be computable in $\poly(n)$ time given the description of $H$.
\end{definition}

We approximately simulate the original Hamiltonian $H$ in the low-energy subspace of $H'$.
There is a corresponding encoding of a state which can be taken as
\[
\mathcal{E}_{\textup{state}}(\rho)= V(\rho\otimes \sigma)V^\dagger
\]
for $\sigma$ such that $P\sigma=\sigma$ (if $P\neq 0$).
If $\rho$ is the eigenvector of $H$ with eigenvalue $\alpha$, then $\mathcal{E}_{\textup{state}}(\rho)$ is approximately the eigenvector of $H'$ with eigenvalue $\alpha'\in [\alpha-\epsilon,\alpha+\epsilon]$.

In \cite{zhou2021strongly}, it is shown that there exist families of Hamiltonians that can efficiently simulate any $\mathcal{O}(1)$-local Hamiltonians.
They call such families of Hamiltonians {\it strongly universal Hamiltonians}.\footnote{
It would be possible to show Theorem~\ref{thm:main} by modifying the verifier circuit $\tilde{U}_x$ following~\cite{oliveira2005complexity} to make the constructed Hamiltonian spatially sparse.
We believe Proposition~\ref{prop:locality} is interesting because the reduction holds for arbitrary $\mathcal{O}(1)$-local Hamiltonian even if it is not originally spatially sparse.
}
We use the construction of strongly universal Hamiltonians of \cite{zhou2021strongly} to show Proposition~\ref{prop:locality}.
Formally, the strong (and weak) universality is defined as follows:
\begin{definition}[Strong and weak universality \cite{zhou2021strongly}]
A family of Hamiltonians $\mathcal{H}=\{H_m\}$ is weakly universal if given any $\Delta,\eta,\epsilon >0$, any $\mathcal{O}(1)$-local, $n$-qubit Hamiltonian can be $(\Delta,\eta,\epsilon)$-simulated.
Such a family is strongly universal if the simulation is always efficient.
\end{definition}

The following result is shown in \cite{zhou2021strongly}:
\begin{theorem}[\cite{zhou2021strongly}]
\label{them:stronguniversality}
Any non-$2SLD$ $\mathcal{S}$-Hamiltonian on a $2D$-square lattice is strongly universal.
\end{theorem}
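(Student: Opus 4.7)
The plan is to establish strong universality by concatenating a constant number of efficient approximate Hamiltonian simulations, each of which is either a known perturbative gadget or an exact subspace encoding. Since $(\Delta,\eta,\epsilon)$-simulations compose with only polynomial blow-up in the parameters, it suffices to exhibit such a chain from an arbitrary $\mathcal{O}(1)$-local Hamiltonian to a non-2SLD $\mathcal{S}$-Hamiltonian on a 2D square lattice, and to verify efficiency at every link. I would in fact follow the six-step chain explicitly sketched in the proof of Proposition~\ref{prop:non2sld} and treat the present theorem as the statement that this chain can be assembled with only polynomial overhead in $n,\Delta,\eta^{-1},\epsilon^{-1}$.

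First, steps (1)--(3) apply the mediator-qubit gadgets of Oliveira--Terhal, together with the Schrieffer--Wolff analysis of \cite{bravyi2011schrieffer}, to reduce any $\mathcal{O}(1)$-local Hamiltonian to a spatially sparse $5$-local Hamiltonian (subdivision gadgets, $\mathcal{O}(\log k)$ rounds), then to a spatially sparse $10$-local \emph{real} Hamiltonian (by appending per-site ancillas in $\ket{+_y}$), and finally to a spatially sparse $2$-local Hamiltonian whose interactions are products of Paulis drawn from $\{X,Z\}$. Each of these individual gadgets is already known to yield an efficient $(\Delta,\eta,\epsilon)$-simulation in the sense of Definition~\ref{def:approxsimulation}. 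Step (4) is an exact subspace encoding into the Heisenberg set $\mathcal{S}_0=\{XX+YY+ZZ\}$ (or the XY set $\{XX+YY\}$): each logical qubit is encoded into a constant-size cluster of physical qubits stabilized by a local Heisenberg/XY patch, and the inter-cluster couplings reproduce the desired Pauli interactions. Step (5) routes the resulting spatially sparse $\mathcal{S}_0$-Hamiltonian onto a 2D square lattice using crossing and edge-subdivision gadgets that again attach $\ket{0}/\ket{1}$ mediator qubits while preserving the interaction type.

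The main obstacle, and the technical core of the theorem, is step (6): realizing an $\mathcal{S}_0$ interaction from an \emph{arbitrary} non-2SLD interaction set $\mathcal{S}$. Here I would first apply the global single-qubit conjugation $U^{\otimes 2}$ that puts the $2$-local part of a generator $h\in\mathcal{S}$ into a canonical normal form; by Definition~\ref{def:non2sld}, no such $U$ can reduce every $h\in\mathcal{S}$ simultaneously to $\alpha ZZ + A\otimes I + I\otimes B$, so the Pauli decomposition of at least one $U^{\otimes 2} h (U^\dagger)^{\otimes 2}$ must contain a genuinely non-diagonal two-body term such as $XX$, $YY$, $XY$, $XZ$, or $YZ$. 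A short case analysis on which off-diagonal Pauli survives, combined with second-order perturbative mediator-qubit gadgets that cancel the leftover single-qubit $A_i,B_i$ contributions through carefully tuned coupling strengths (in the style of \cite{bravyi2011schrieffer,cubitt2018universal}), produces an effective Heisenberg or XY coupling with overhead polynomial in $\Delta,\eta^{-1},\epsilon^{-1}$. Composing this with steps (1)--(5) and invoking the composition lemma for approximate simulations then yields the strong universality claim. The hard part will be controlling the efficiency in step (6) uniformly over all non-2SLD $\mathcal{S}$, since the normal form and the precise perturbative scheme depend on the algebraic structure of $\mathcal{S}$; one must verify that in each case of the classification the gadget order stays bounded and the coupling strengths only blow up polynomially.
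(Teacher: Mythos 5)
This statement is a \emph{cited} result from \cite{zhou2021strongly}; the paper does not prove it, it only sketches the six-step simulation chain in Appendix~\ref{sec:encoding} in order to track how guiding states transform under each encoding. Your proposal attempts to reconstruct the proof, and it correctly identifies the overall strategy (compose a constant number of efficient $(\Delta,\eta,\epsilon)$-simulations from an arbitrary $\mathcal{O}(1)$-local Hamiltonian down to a non-2SLD $\mathcal{S}$-Hamiltonian on the lattice), and your account of steps (2)--(6) tracks the references. However, your step (1) contains a genuine gap that would cause the argument to fail.

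You claim that subdivision gadgets (in $\mathcal{O}(\log k)$ rounds) reduce an arbitrary $\mathcal{O}(1)$-local Hamiltonian to a \emph{spatially sparse} $5$-local one. Subdivision gadgets reduce locality, but they do not reduce the interaction degree: in a general $k$-local Hamiltonian a single qubit may participate in $\Theta(n^{k-1})$ terms, and after subdivision that qubit still has unbounded degree in the interaction graph. Spatial sparsity (bounded degree, planar-ish layout) is exactly what the Oliveira--Terhal machinery needs as a \emph{precondition}, and it holds for free only when the Hamiltonian comes from a circuit-to-Hamiltonian construction. The key new ingredient of \cite{zhou2021strongly}, reflected in step (1) of Appendix~\ref{sec:encoding}, is a Hamiltonian-to-circuit step: one builds a spatially sparse phase-estimation circuit $U^{\rm sparse}_{\rm PE}$ (from nearest-neighbour gates plus SWAPs) that approximately writes the eigenvalues of $H$ into an ancilla register, combines it with idling and uncomputation, and then applies Kitaev's circuit-to-Hamiltonian construction. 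The resulting $5$-local Hamiltonian is spatially sparse \emph{because it came from a bounded-degree circuit}, and a first-order (Schrieffer--Wolff) analysis shows that it simulates $H$ on its low-energy subspace. Without this detour through a circuit your chain never produces a spatially sparse Hamiltonian, so steps (2)--(5) cannot be applied. Once step (1) is corrected, the remainder of your sketch is the right route, though for step (6) you should simply invoke the classification results of \cite{cubitt2018universal} (Theorem~43 there) rather than rederive the case analysis.
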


\section{Schrieffer-Wolf transformation for 1-dimensional gapped ground space}
\label{section:SWtransform}

Let us introduce the Schrieffer-Wolf transformation and its approximation \cite{bravyi2011schrieffer} which we use in the proof.
We only consider the case when the unperturbed Hamiltonian has 1-dimensional ground space.

Let $H_0$ be a Hamiltonian that has 1-dimensional ground space spanned by $\ket{g_0}$ whose energy is 0. Let us assume that the smallest non-zero eigenvalue of $H_0$ is larger than one.
Consider the following (perturbed) Hamiltonian:
$H=\Delta H_0 + V$.
We shall always assume that $\|V\|\leq \Delta/2$ in the following.
Then, there is only one eigenvector (which we denote $\ket{g}$) of $H$ with eigenvalue lying in the interval of $[-\Delta/2,\Delta/2]$ (Lemma 3.1 of \cite{bravyi2011schrieffer}).

Then, the Schrieffer-Wolf (SW) transformation is defined as
a unitary $U_{\rm SW}$ that maps the ground space of $H$ to that of $H_0$. That is, $U_{\rm SW}\ket{g}=\ket{g_0}$.
The Hamiltonian
\[
H_{\rm eff}=\Pi_0 U_{\rm SW} (\Delta H_0 + V)U_{\rm SW}^\dagger \Pi_0
\]
is called the effective low-energy Hamiltonian. Here, $\Pi_0$ is the projector onto the ground space of $H_0$.
The eigenvector of $H_{\rm eff}$ is $\ket{g_0}$ and the eigenvalue is the same as the eigenvalue of $\ket{g}$ with respect to $H$.

Next, we show how to approximate $U_{\rm SW}$ and $H_{\rm eff}$. We only need the simplest first-order approximation
in the proof of Proposition~\ref{proposition:overlap}.
In the following, we further assume $\|V\|\leq \Delta/16$. Then, it is known that
\begin{equation}
\label{eq:eigenvector}
\|I-U_{\rm SW}\|\in \mathcal{O}(\Delta^{-1}\|V\|)
\end{equation}
and
\begin{equation}
\label{eq:eigenvalue}
\|H_{\rm eff}-\Pi_0 V\Pi_0\|\in \mathcal{O}(\Delta^{-1}\|V\|^2)
\end{equation}
hold (Lemma 3.4  \cite{bravyi2011schrieffer}, Lemma 4  \cite{bravyi2017complexity}).
This means that $I$ and $\Pi_0V\Pi_0$ work as the first-order approximation of $U_{\rm SW}$ and $H_{\rm eff}$, respectively.
The derivation and the forms of the higher-order terms can be found in \cite{bravyi2011schrieffer}.
From eq.~\eqref{eq:eigenvector}, it follows that
\begin{equation}
\label{eq:SWoverlap}
\big\|\ket{g}-\ket{g_0}\big\|=
\left\|(I-U_{\rm SW}^\dagger)\ket{g_0}\right\|\in \mathcal{O}(\Delta^{-1}\|V\|).
\end{equation}
It follows from eq.~\eqref{eq:eigenvalue} that the ground state energy of $H$ differs at most $\mathcal{O}(\Delta^{-1}\|V\|^2)$ from the eigenvalue of $H_{\rm{eff},1}\coloneqq \Pi_0 V\Pi_0$ (restricted to the space spanned by $\ket{g_0}$).

\section{Encoding of states for strong Hamiltonian simulation}
\label{sec:encoding}

We sketch the construction of the strong Hamiltonian simulation introduced in \cite{zhou2021strongly}.
The simulation mainly consists of two parts.
First, they construct spatially sparse 5-local Hamiltonian~\cite{oliveira2005complexity} using a quantum phase estimation circuit and its modification.
This procedure may be thought of as a ``Hamiltonian-to-circuit'' (then goes back to Hamiltonian by circuit-to-Hamiltonian) construction.
Then, they perturbatively simulate the spatially sparse Hamiltonian with known techniques in the literature~\cite{oliveira2005complexity, cubitt2018universal,piddock2017complexity}.
In the following, we overview their construction.

\paragraph*{(1) Arbitrary $k$-local Hamiltonian $\rightarrow$ spatially sparse $5$-local Hamiltonian (\cite{zhou2021strongly}).}
Let $H$ be a target $\mathcal{O}(1)$-local Hamiltonian. Assume that $H$ can be written as $H=\sum_i E_i \ket{\psi_i} \bra{\psi_i}$ where $\{E_i\}$ and $\{\ket{\psi_i}\}$ are the eigenvalues and eigenvectors of $H$.
In~\cite{zhou2021strongly}, they showed that there is a spatially sparse quantum circuit $U^{\rm sparse}_{\rm PE}$ that approximately estimates the energy of $H$, i.e.
\[
U^{\rm sparse}_{\rm PE} \sum_i c_i \ket{\psi_i}\ket{0^m} \approx \sum_i c_i \ket{\psi_i}\ket{\tilde{E}_i}\ket{other},
\]
where $\{c_i\}$ are arbitrary coefficients and $\{|\tilde{E}_i\rangle\}$ are approximations of $\{E_i\}$.

The circuit $U^{\rm sparse}_{\rm PE}$ is implemented first by constructing $U^{\rm sparse}_{\rm NN}$ that consists of 1D nearest-neighborhood interaction. Then, $U^{\rm sparse}_{\rm NN}$ is converted into a spatially sparse circuit using ancilla qubits and swap gates.

Then they combine uncomputation and idling to construct
\[U=({\rm Idling})(U^{\rm sparse}_{\rm PE})^\dagger({\rm Idling})U^{\rm sparse}_{\rm PE}.\]
They apply circuit-to-Hamiltonian construction for this $U$ to construct spatially sparse 5-local Hamiltonian $H_{\rm circuit}$.
They use first-order perturbation theory to show that $H_{\rm circuit}$ simulates $H$ in its low-energy subspace.
The encoding of $H_{\rm circuit}$ to the low energy subspace of $H$ is approximated by the map: $H \rightarrow H \otimes \ket{\alpha}\bra{\alpha}$.
Here,
$\ket{\alpha}$ is a subset state with $\poly(n)$-size subset $S'$ that is related to the history state of the idling steps after uncomputation. For detail, see the proof of Proposition 2 of~\cite{zhou2021strongly}.
Then, the corresponding encoding of the state is
\[
\ket{u} \rightarrow \ket{u}\otimes \ket{\alpha}.
\]
The encoded state is also a semi-classical subset state if $\ket{u}$ is a semi-classical subset state.

\paragraph*{(2) Spatially sparse $5$-local Hamiltonian $\rightarrow$ Spatially sparse $10$-local real Hamiltonian~(Lemma 22 of~\cite{cubitt2018universal}).}
In this simulation, the state is encoded by attaching polynomially many $\ket{+_y}$ where $\ket{+_y}$ is the $+1$ eigenvector of Pauli $Y$ matrix:
\begin{equation}
\label{eq:11}
\ket{u} \rightarrow \ket{u}\otimes\ket{+_y}\otimes \cdots \otimes\ket{+_y}.
\end{equation}
This encoding does not map a semi-classical subset state into a semi-classical state but maps into a semi-classical encoded state. The reason is as follows. Let $V_y$ be a unitary such that $\ket{+_y}=V_y\ket{0}$, and $\ket{u}=1/\sqrt{|S|}\sum_{x\in S}\ket{x}$.
Then, the right side of eq.~\eqref{eq:11} can be written as
\[
\ket{u}\otimes\ket{+_y}\otimes \cdots \otimes\ket{+_y} 
= \frac{1}{\sqrt{|S|}} \sum_{x\in S\times \{0...0\}} I\otimes  \cdots \otimes I \otimes V_y \otimes \cdots \otimes V_y \ket{x}.
\]
This is a semi-classical encoded state with a subset $S\times\{0...0\}$ and a local isometry (this is indeed a local unitary) $I\otimes  \cdots \otimes I \otimes V_y \otimes \cdots \otimes V_y$.

\paragraph*{(3) Spatially sparse $10$-local real Hamiltonian $\rightarrow$ Spatially sparse $2$-local Pauli interactions with no $Y$-terms~(\cite{oliveira2005complexity,cubitt2016complexity}).}
This can be done first by simulating the $10$-local real Hamiltonian with $11$-local Hamiltonian whose Pauli decomposition does not contain any Pauli $Y$ terms~\cite[Lemma 40]{cubitt2018universal}.
In the corresponding encoding, $\ket{1}$ states are attached for the polynomially many mediator qubits introduced in the simulation.
Then, we can use subdivision gadgets and 3-to-2 gadgets~\cite{oliveira2005complexity}.
In this simulation, polynomially many mediator qubits are introduced, and the encoding of states is just to add $\ket{0}$ states for each of the mediator qubits. The resulting Hamiltonian can be written in the form $\sum_{i<j}\alpha_{ij}A_{ij}+\sum_k (\beta_kX_k + \gamma_k Z_k)$, where $A_{ij}$ is one of the interactions of $X_iX_j$, $X_iZ_j$, $Z_iX_j$ or $Z_iZ_j$.

\paragraph*{(4) Subspace encoding for spatially sparse  $\mathcal{S}_0=\{XX+YY+ZZ\}$ or $\{XX+YY\}$ Hamiltonian (Theorem 42 of~\cite{cubitt2018universal}).}
We have already obtained 2-local Hamiltonian in the form $\sum_{i<j}\alpha_{ij}A_{ij}+\sum_k (\beta_kX_k + \gamma_k Z_k)$. Then we show how to simulate this Hamiltonian with arbitrary non-2SLD $\mathcal{S}$-Hamiltonians.
We first consider $\mathcal{S}_0$ Hamiltonian, where $\mathcal{S}_0=\{XX+YY+ZZ\}$ or $\mathcal{S}_0=\{XX+YY\}$.
In this simulation, we use subspace encoding in which the {\it logical qubit} of the original Hamiltonian is encoded into four {\it physical qubits}.
Consider the simulation by Heisenberg interaction $\{XX+YY+ZZ\}$ for example.
Each logical qubit is encoded into 4 qubit state by an isometry that is defined as
\begin{align}\label{eq:isometry1}
&V\ket{0}=\ket{0_L}=\ket{\Psi_-}_{13}\ket{\Psi_-}_{24}\\ \label{eq:isometry2}
&V\ket{1}=\ket{1_L}=\frac{2}{\sqrt{3}}\ket{\Psi_-}_{12}\ket{\Psi_-}_{34}-\frac{1}{\sqrt{3}}\ket{\Psi_-}_{13}\ket{\Psi_-}_{24},
\end{align}
where $\ket{\Psi_-}=(\ket{01}-\ket{10})/\sqrt{2}$.
For details, see \cite[Theorem 42]{cubitt2018universal}.
The encoding of states for {XX+YY} interaction is the same. 
A semi-classical encoded state is clearly mapped to a semi-classical encoded state by applying a local isometry of the corresponding subspace encoding.

\paragraph*{(5) Spatially sparse $\mathcal{S}_0$-Hamiltonian $\rightarrow$ $\mathcal{S}_0$-Hamiltonians on a 2D square lattice (Lemma 47 of~\cite{cubitt2018universal}).}
This simulation can be done using three perturbative gadgets called subdivision, fork, and crossing gadgets. All of these gadgets attach a mediator qubit for each use of the gadgets.
$\mathcal{O}(1)$ rounds of parallel use of perturbative gadgets are sufficient to simulate a spatially sparse $\mathcal{S}_0$-Hamiltonian by a $\mathcal{S}_0$-Hamiltonians on a 2D square lattice, which prevents the interaction strength to grow exponentially. (For general interaction graphs, $\mathcal{O}(\log{n})$ rounds of perturbative simulations are necessary.)

\paragraph*{(6) $\mathcal{S}_0$-Hamiltonian on 2D square lattice $\rightarrow$ Arbitrary non-SLD $\mathcal{S}$-Hamiltonian on a 2D square lattice (Theorem 43 of~\cite{cubitt2018universal}).}
Finally, this simulation is similarly done by using variants of mediator qubit gadgets or subspace encoding gadgets as well as applying
local unitaries.\footnote{Applying local unitaries means to simulate $H$ by $U^{\otimes n}H(U^\dagger)^{\otimes n}$ where $U$ acts on one qubit. The corresponding encoding of state is $\mathcal{E}_{state}(\ket{\psi})=U^{\otimes n}\ket{\psi}$.}

\enlargethispage{\baselineskip}

\end{document}